\documentclass[onecolumn]{IEEEtran}
\usepackage[utf8]{inputenc} 
\usepackage[T1]{fontenc} 
\usepackage{amsthm}
\usepackage[cmex10]{amsmath}
\usepackage{times,amssymb,amsfonts,float,nicefrac,color,mathrsfs,caption} 
\usepackage{enumerate,multirow,caption,tikz,graphicx,enumitem,soul}
\usepackage[mathscr]{eucal}
\usepackage{sidecap}
\usepackage{algpseudocode}
\usepackage{verbatim}
\usepackage{epstopdf}
\usepackage{textcomp}
\usetikzlibrary{shapes,arrows}
\usepackage[noadjust]{cite}
\usepackage{booktabs}
\usepackage{stmaryrd}
\allowdisplaybreaks

\usepackage[font={it}]{caption}
\usepackage[margin=0.05in]{subcaption}
\usepackage[bottom]{footmisc}
\usepackage[normalem]{ulem}

\usepackage[ruled,vlined,linesnumbered]{algorithm2e}

\usepackage{hyperref} 

\usepackage{balance}

\newtheoremstyle{mytheoremstyle}{3pt}{3pt}{\itshape}{}{\bf}{.}{.3em}{} 
\theoremstyle{mytheoremstyle}
\newtheorem{theorem}{Theorem}

\newcommand\nc\newcommand
\nc\bfa{{\boldsymbol a}}\nc\bfA{{\boldsymbol A}}\nc\cA{{\mathscr A}}
\nc\bfb{{\boldsymbol b}}\nc\bfB{{\boldsymbol B}}\nc\cB{{\mathscr B}}
\nc\bfc{{\boldsymbol c}}\nc\bfC{{\boldsymbol C}}\nc\cC{{\mathscr C}}
\nc\bfd{{\boldsymbol d}}\nc\bfD{{\boldsymbol D}}\nc\cD{{\mathscr D}}
\nc\bfe{{\boldsymbol e}}\nc\bfE{{\boldsymbol E}}\nc\cE{{\mathscr E}}
\nc\bff{{\boldsymbol f}}\nc\bfF{{\boldsymbol F}}\nc\cF{{\mathscr F}}
\nc\bfg{{\boldsymbol g}}\nc\bfG{{\boldsymbol G}}\nc\cG{{\mathscr G}}
\nc\bfh{{\boldsymbol h}}\nc\bfH{{\boldsymbol H}}\nc\cH{{\mathscr H}}
\nc\bfi{{\boldsymbol i}}\nc\bfI{{\boldsymbol I}}\nc\cI{{\mathcal I}}
\nc\bfj{{\boldsymbol j}}\nc\bfJ{{\boldsymbol J}}\nc\cJ{{\mathscr J}}
\nc\bfk{{\boldsymbol k}}\nc\bfK{{\boldsymbol K}}\nc\cK{{\mathscr K}}
\nc\bfl{{\boldsymbol l}}\nc\bfL{{\boldsymbol L}}\nc\cL{{\mathscr L}}
\nc\bfm{{\boldsymbol m}}\nc\bfM{{\boldsymbol M}}\nc{\cM}{{\mathscr M}}
\nc\bfn{{\boldsymbol n}}\nc\bfN{{\boldsymbol N}}\nc\cN{{\mathscr N}}
\nc\bfo{{\boldsymbol o}}\nc\bfO{{\boldsymbol O}}\nc\cO{{\mathscr O}}
\nc\bfp{{\boldsymbol p}}\nc\bfP{{\boldsymbol P}}\nc\cP{{\mathscr P}}\nc\eP{{\EuScriptP}}\nc\fP{{\mathfrak P}}
\nc\bfq{{\boldsymbol q}}\nc\bfQ{{\boldsymbol Q}}\nc\cQ{{\mathscr Q}}
\nc\bfr{{\boldsymbol r}}\nc\bfR{{\boldsymbol R}}\nc\cR{{\mathscr R}}
\nc\bfs{{\boldsymbol s}}\nc\bfS{{\boldsymbol S}}\nc\cS{{\mathscr S}}
\nc\bft{{\boldsymbol t}}\nc\bfT{{\boldsymbol T}}\nc\cT{{\mathscr T}}
\nc\bfu{{\boldsymbol u}}\nc\bfU{{\boldsymbol U}}\nc\cU{{\mathscr U}}
\nc\bfv{{\boldsymbol v}}\nc\bfV{{\boldsymbol V}}\nc\cV{{\mathscr V}}
\nc\bfw{{\boldsymbol w}}\nc\bfW{{\boldsymbol W}}\nc\cW{{\mathscr W}}
\nc\bfx{{\boldsymbol x}}\nc\bfX{{\boldsymbol X}}\nc\cX{{\mathscr X}}
\nc\bfy{{\boldsymbol y}}\nc\bfY{{\boldsymbol Y}}\nc\cY{{\mathscr Y}}
\nc\bfz{{\boldsymbol z}}\nc\bfZ{{\boldsymbol Z}}\nc\cZ{{\mathscr Z}}

\newtheorem{lemma}[theorem]{Lemma}
\newtheorem{claim}[theorem]{Claim}

\newtheorem{proposition}[theorem]{Proposition}

\newtheorem{definition}{Definition}

\newtheorem{example}{Example}[section]

\theoremstyle{remark}
\newtheorem{remark}{Remark}[section]

\newcommand{\etal}{{\em et al.\ }}

\DeclareMathOperator{\rank}{rank}

\newcommand{\ff}{{\mathbb F}}
\DeclareMathOperator{\wt}{wt}
\DeclareMathOperator{\rw}{rw}
\DeclareMathOperator{\arw}{arw}

\newcommand{\df}{d_{\mathrm{free}}}
\newcommand{\nz}[1]{\llbracket{#1}\rrbracket}
\newcommand{\pz}[1]{[{#1}]}

\begin{document}
	
	\title{Trellis codes with a good distance profile constructed from expander graphs}
	
	\author{
		\IEEEauthorblockN{Yubin Zhu} \hspace*{1in}
		\and
		\IEEEauthorblockN{Zitan Chen}
	}
	\maketitle	
	
	{\renewcommand{\thefootnote}{}\footnotetext{
			
			\vspace{-.2in}
			
			\noindent\rule{1.5in}{.4pt}

			{	
				Y.~Zhu is with School of Science and Engineering, The Chinese University of Hong Kong (Shenzhen), Shenzhen, Guangdong, 518172,  P.\,R.\,China (Email: yubinzhu@link.cuhk.edu.cn).			
				
				Z.~Chen is with School of Science and Engineering, Future Networks of Intelligence Institute, The Chinese University of Hong Kong (Shenzhen), Shenzhen, Guangdong, 518172, P.\,R.\,China (Email: chenztan@cuhk.edu.cn).
			}
		}
	}
	\renewcommand{\thefootnote}{\arabic{footnote}}
	\setcounter{footnote}{0}

	\begin{abstract} 
		We derive Singleton-type bounds on the free distance and column distances of trellis codes. Our results show that, at a given time instant, the maximum attainable column distance of trellis codes can exceed that of convolutional codes. Moreover, using expander graphs, we construct trellis codes over constant-size alphabets that achieve a rate-distance trade-off arbitrarily close to that of convolutional codes with a maximum distance profile. By comparison, all known constructions of convolutional codes with a maximum distance profile require working over alphabets whose size grows at least exponentially with the number of output symbols per time instant.
	\end{abstract}
	
	
	\section{Introduction}
	
	To encode a possibly infinite stream of data symbols using a length-$n$ block code, one may first divide the data steam into data blocks of length $k<n$, and then encode each data block into a code block of length $n$. In this block-coding setting, each code block relies only on a single data block. Time-dependent coding schemes generalize this approach by allowing each code block to causally depend on multiple preceding data blocks. Codes arising from such time-dependent coding schemes are called \emph{trellis codes}, in contrast to block codes. The subclass of trellis codes that are linear, analogous to block linear codes in the block-coding setting, is known as \emph{convolutional codes}. Below we review some basic concepts and properties of convolutional codes.
	
	\subsection{Convolutional codes}
	Let $m\geq 0$ be an integer and let $G_i,0\leq i\leq m$ be $k\times n$ matrices over the finite field $\ff_q$ of order $q$. A convolutional code over $\ff_q$ can be defined by the following semi-infinite matrix 
	\begin{equation*}
		G=\begin{pmatrix} G_0&G_1&\dots&G_m\\&G_0&G_1&\dots&G_m\\&&\ddots&\ddots&&\ddots\end{pmatrix},
	\end{equation*} where the empty blocks denote zero matrices. Given a semi-infinite sequence of length-$k$ vectors $x_t\in\ff_q^k,t\geq 0$, the sequence $(c_0,c_1,\ldots):=(x_0,x_1,\ldots)G$ is a codeword of the convolutional code generated by $G$, where $c_t=x_tG_0+x_{t-1}G_1+\cdots+x_{t-m}G_m\in\ff_q^n$ is the code block at time $t\geq 0$. 
	If the matrices $G_0,\ldots,G_m$ depend on the time $t$, then more precisely, one has $c_t=x_tG_0(t)+x_{t-1}G_1(t)+\cdots+x_{t-m}G_m(t)$ and the code is said to be \emph{time-varying}; otherwise it is called \emph{time-invariant}. In the sequel, we focus on time-invariant convolutional codes and refer to them simply as convolutional codes.
	
	In view of the encoding of $(x_0,x_1,\ldots)$ with $G$, convolutional codes can be thought of as nonblock linear codes over $\ff_q$. But it is often convenient to formulate convolutional codes in the language of polynomial rings.
	In particular, the sequence $(x_0,x_1,\ldots)$ can be associated with the polynomial $x(D)=\sum_{i=0}^{\infty}x_iD^i$. Similarly, the matrix $G$ can be identified with the polynomial $G(D)=\sum_{i=0}^{m}G_iD^i$ and the sequence $(c_0,c_1,\ldots)$ with the polynomial $c(D)=\sum_{i=0}^{\infty}c_iD^i$. It then follows that $c(D)=x(D)G(D)$. Let $\ff_q[D]$ be the ring of polynomials with indeterminate $D$ and coefficients in $\ff_q$.  Noticing $x_i\in\ff_q^k$, the polynomial $x(D)$ can also be viewed as a $k$-tuple with entries in $\ff_q[D]$. In a similar vein, $G(D),c(D)$ can be respectively viewed as a $k\times n$ matrix and an $n$-tuple with entries in $\ff_q[D]$. Formally, we have the following definition for convolutional codes from a module-theoretic perspective.
	
	\begin{definition}
		An $(n,k)$ convolutional code over $\ff_q$ is an $\ff_q[D]$-submodule $\cC\subset \ff_q[D]^n$ of rank $k$. A $k\times n$ polynomial matrix $G(D)$  over $\ff_q[D]$ is called a generator matrix of the code $\cC$ if 
		$
		\cC= \{x(D)G(D)\mid x(D)\in \ff_q[D]^k\}.
		$
	\end{definition} 
	The 
	$i$th \emph{row degree} of a generator matrix $G(D)=(g_{i,j}(D))$ of an $(n,k)$ convolutional code $\cC$ is defined to be 
	$
	\nu_i(G)=\max_{1\leq j\leq n} \deg g_{i,j}(D)
	$ for $i=1,\ldots,k$.
	Furthermore, the \emph{memory} of $G(D)$ is defined as 
	$
	\max_{1\leq i\leq k}\nu_i(G).
	$
	The \emph{overall constraint length} of $G(D)$ is defined to be 
	$
	\nu(G)=\sum_{i=1}^{k}\nu_i(G).
	$
	The matrix $G(D)$ is called \emph{reduced} (or \emph{minimal}) if its overall constraint length is the smallest over all generator matrices of the code $\cC$.
	This minimum overall constraint length is called the \emph{degree} of the code $\cC$, denoted by $\delta(\cC)$. Namely, we have
	$
	\delta(\cC) = \min\{\nu(G) \mid G(D)\text{ is a generator matrix of }\cC\}.
	$
	
	In direct analogy with block linear codes, one defines the \emph{free distance} of an $(n,k)$ convolutional code $\cC$ as the minimum Hamming weight of the nonzero codewords in $\cC$. More precisely, viewing a codeword $c(D)\in\cC$ as the sequence $(c_0,c_1,\ldots)$ over $\ff_q^n$, one defines the Hamming weight of $c(D)$ to be 
	$
	\wt(c(D)) = \wt(c_0,c_1,\ldots) = \sum_{t=0}^{\infty}\wt(c_t),
	$ where $\wt(c_t)$ denotes the Hamming weight of the vector $c_t\in \ff_q^n$. The {free distance} of $\cC$ is then defined to be
	\begin{align*}
		\df(\cC)=\min_{{c(D)\in\cC, c(D)\neq 0}}\wt(c(D)).
	\end{align*} 
	In addition to the free distance, one can also associate $\cC$ with another distance measure, termed \emph{column distance}. Roughly speaking, the column distance measures the minimum Hamming weight of truncations $(c_0,c_1,\ldots,c_j)$ of codewords $(c_0,c_1,\ldots)\in\cC$ with $c_0\neq 0$. 
	More precisely, assuming $G_0$ has full rank, the $j$th column distance of $\cC$ is defined as 
	\begin{align*}
		d_j^c(\cC)=\min_{{c(D)\in\cC,c_0\neq 0}}\wt(c_0,\ldots,c_j) .
	\end{align*}
	In \cite{rosenthal1999maximum}, Rosenthal and Smarandache showed that the free distance of an $(n,k)$ convolutional code $\cC$ satisfies the following upper bound that generalizes the Singleton bound for block codes:
	\begin{align}
		\df(\cC)\leq (n-k)\left( \left\lfloor\frac{\delta(\cC)}{k}\right\rfloor+1\right)+\delta(\cC)+1.\label{eq:sb}
	\end{align}
	Convolutional codes whose free distance attains \eqref{eq:sb} with equality are called maximum distance separable (MDS) convolutional codes.
	As for the column distances, it is shown in \cite{gluesing2006strongly} that for all $j\geq 0$,
	\begin{align}
		d_j^c(\cC)\leq (n-k)(j+1)+1.\label{eq:sb-c}
	\end{align} 
	Moreover, if equality holds for some $j$, then equality is also attained for all the previous distances $d_i^c(\cC),i< j$ in their corresponding versions of the bound \eqref{eq:sb-c}. 
	Convolutional codes whose column distances attain the bound \eqref{eq:sb-c} for the largest possible number of time instants are said to have a maximum distance profile (MDP), also known as MDP convolutional codes.
	Since $d_j^c(\cC)$ is a non-decreasing function of $j$ and $\lim_{j\to\infty} d_j^c(\cC)= \df(\cC)$, the column distance coincides with the free distance eventually. Therefore, the maximum number of time instants for which \eqref{eq:sb-c} can be attained is $L(\cC):=\lfloor\frac{\delta(\cC)}{k}\rfloor+\lfloor\frac{\delta(\cC)}{n-k}\rfloor$, which we refer to as the \emph{maximum profile length}. Accordingly, MDP convolutional codes are precisely those whose $L(\cC)$-th column distance attains the bound \eqref{eq:sb-c}. 
	Furthermore, the earliest time instant at which the bounds \eqref{eq:sb} and \eqref{eq:sb-c} coincide is $J(\cC):=\lfloor\frac{\delta(\cC)}{k}\rfloor+\lceil\frac{\delta(\cC)}{n-k}\rceil$, and convolutional codes whose $J(\cC)$-th column distance equals the right-hand side of the bound \eqref{eq:sb} are called strongly-MDS convolutional codes.
	
	\subsection{Trellis codes}
	We next turn to a brief introduction to trellis codes.
	Informally, a trellis code is formed of codewords generated by a trellis: a time-indexed digraph with vertices at each time and labeled edges between successive time-indexed vertex sets. Each valid codeword corresponds to a path through the trellis, obtained by reading off the edge labels along the path. More precisely, trellis codes can be described in terms of labeled digraphs. Let $\Sigma_q$ be a finite alphabet of size $q$ and let $\Gamma=(V,E,\psi)$ be a labeled digraph where $V$ is a nonempty finite set of vertices (also called \emph{states}), $E$ is a finite set of edges, and $\psi\colon E\to \Sigma_q^n$ is an edge-labeling function. The labeled digraph $\Gamma$ is \emph{irreducible} if for every ordered pair of states $(v_1,v_2)\in V\times V$ there is a path from $v_1$ to $v_2$ in $\Gamma$. Moreover, $\Gamma$ is said to be \emph{lossless}, if for any two distinct paths in $\Gamma$ with the same
	initial state and the same terminal state, the edge-labeling function $\psi$ induces different sequences of edge labels. The labeled digraph $\Gamma$ is called $M$-\emph{regular} if each state in $V$ has exactly $M$ outgoing edges.
	
	\begin{definition}
		An $(n,M)$ trellis code over $\Sigma_q$ is a set $\cC\subset(\Sigma_q^n)^\mathbb{N}$ consisting of all infinite sequences of edge labels induced by paths that starts from a fixed initial state in an irreducible, lossless, and $M$-regular labeled digraph $\Gamma$ with edge labels in $\Sigma_q^n$. 
		The digraph $\Gamma$ is called a generator (or presenting) digraph of the code $\cC$.
	\end{definition} 
	
	The \emph{rate} of an $(n,M)$ trellis code $\cC$ over $\Sigma_q$ is defined to be $\log_q M/n$. We refer to $M$ as the \emph{size} of $\cC$, and write $|\cC|:=M$.\footnote{Similarly, for an $(n,k)$ convolutional code $\cC$ over $\ff_q$, we refer to $q^k$ as the size of $\cC$ and write $|\cC|=q^k$.} Note that a trellis code may admit multiple generator digraphs, and thus the number of states needed to describe the trellis code need not be unique. With the operational meaning of the degree of convolutional codes in mind, we define the degree of a trellis code $\cC$ over $\Sigma_q$ as $\delta(\cC)=\min\{\log_q |V|: \Gamma=(V,E,\psi) \text{  is a generator digraph of } \cC\}$.
	
	Similarly to convolutional codes, we can also define the free distance and column distance for trellis codes. Let $c,c'$ be two codewords of an $(n,M)$ trellis code $\cC$ over $\Sigma_q$, and write $c=(c_0,c_1,\ldots),c'=(c'_0,c'_1,\ldots)$. The Hamming distance between $c,c'$ is defined as 
	$
	d(c,c') = \sum_{t=0}^{\infty}d(c_t,c'_t),
	$ where $d(c_t,c'_t)$ denotes the Hamming distance between the two $n$-tuples $c_t,c'_t$ over $\Sigma_q$. The {free distance} of $\cC$ is
	\begin{align*}
		\df(\cC)=\min_{{c,c'\in\cC, c\neq c'}}d(c,c').
	\end{align*} 
	Moreover, assume $\cC$ admits a generator digraph $\Gamma=(V,E,\psi)$ such that, for every state
	in $V$, no two outgoing edges have the same label (i.e., $\Gamma$ is \emph{deterministic}). The $j$th column distance of $\cC$ is then defined as 
	\begin{align*}
		d_j^c(\cC)=\min_{{c,c'\in\cC,c_0\neq c'_0}}\sum_{t=0}^{j}d(c_t,c'_t).
	\end{align*}

	\subsection{Our results and related work}
	
	It is tempting to conjecture that analogues of the Singleton-type bounds \eqref{eq:sb} and \eqref{eq:sb-c} on the free distance and column distances of convolutional codes also hold for trellis codes. However, to the best of our knowledge, no such results have appeared in the literature. In this work, we establish Singleton-type bounds on the free distance and column distances for trellis codes. Our results show that the bound \eqref{eq:sb}, with $k$ replaced by $\log_q M$, remains valid for the free distance of $(n,M)$ trellis codes over $\Sigma_q$. However, an analogous replacement in \eqref{eq:sb-c} does not hold in general for the column distances of trellis codes. In particular, the $j$th column distance of an $(n,M=q^k)$ trellis code over $\Sigma_q$ can exceed the maximum $j$th column distance of $(n,k)$ convolutional codes over $\ff_q$.
	Furthermore, our second contribution in this work is a construction of trellis codes over constant-size alphabets whose rate-distance trade-off can be made arbitrarily close to that of an MDP convolutional code. In contrast, all known constructions of MDP convolutional codes require an alphabet whose size grows at least exponentially with the block length $n$. 
	
	We next give an overview of related work on MDP convolutional codes. Gluesing-Luerssen \etal \cite{gluesing2006strongly} gave the first explicit construction of MDP convolutional codes over large finite fields of large characteristic. Generalizing the approach in \cite{gluesing2006strongly}, Almeida \etal \cite{almeida2013new} constructed another class of MDP convolutional codes over large finite fields of arbitrary characteristic. However, the field sizes required by \cite{almeida2013new} are substantially larger than those in \cite{gluesing2006strongly}. A third general class of explicit MDP convolutional codes was presented in \cite{alfarano2020weighted}, which in certain parameter regimes requires smaller finite fields than \cite{gluesing2006strongly}. 
	The aforementioned constructions apply for arbitrary maximum profile length. For the special case of unit maximum profile length, MDP convolutional codes over smaller finite fields were given in \cite{chen2022convolutional}, \cite{luo2023construction}, and \cite{chen2023lower}, among which the codes in \cite{chen2023lower} achieve the smallest field size currently known. However, as noted above, all of the known explicit constructions require a finite field of size at least exponential in $n$.
	Over the binary field, the recent work \cite{abreu2023binary} constructed convolutional codes with largest possible column distances based on simplex block codes.
	With the goal of reducing the encoding complexity, the authors of \cite{dang2025matrix} proposed a matrix completion approach for constructing unit-memory MDP convolutional codes.
	Regarding impossibility results on the field size required for the existence of MDP convolutional codes, a lower bound on the field size that holds for general code parameters was derived in \cite{chen2023lower}. Specifically, the author of \cite{chen2023lower} showed that a finite field with size growing polynomially in $n$ is necessary to construct MDP convolutional codes with constant rate and constant maximum profile length.
	
	We note that the trellis codes constructed in this work are defined over a constant-size alphabet, have a constant rate, and achieve a free distance whose growth rate is linear in $n$. Codes with such properties are often referred to as \emph{asymptotically good}. Asymptotically good convolutional codes have been studied in a series of earlier works. Specifically, asymptotic lower bounds on the free distance of convolutional codes, analogous to the Gilbert-Varshamov bound for block linear codes, were presented in \cite{costello1974free}, implying the existence of asymptotically good convolutional codes. In \cite{justesen1973new}, asymptotically good time-varying convolutional codes were explicitly constructed by applying the technique of constructing asymptotically good block linear codes in \cite{justesen1972class} on a class of convolutional codes generated by 
	cyclic Reed-Solomon codes. Building on the transitive algebraic geometry codes in \cite{stichtenoth2006transitive}, asymptotically good (time-invariant) convolutional codes are obtained in \cite{la2019asymptotically}. 
	The codes in \cite{justesen1973new} and \cite{la2019asymptotically} are constructed over a non-binary field. Another line of research \cite{sridharan2007distance,truhachev2010distance,mitchell2012minimum} is devoted to the study of asymptotically good low-density parity-check (LDPC) convolutional codes over the binary field. In particular, various ensembles of LDPC convolutional codes were analyzed in \cite{sridharan2007distance,truhachev2010distance,mitchell2012minimum} and asymptotic lower bounds on the free distance for the ensembles were showed to grow linearly with the decoding constraint length, thereby suggesting the existence of asymptotically good binary convolutional codes in the ensembles. However, a brute-force search in the ensembles may be required to identify the asymptotically good codes.
	
	In this work, our trellis codes are constructed using {expander graphs}, in particular {spectral expanders}. Roughly speaking, an expander graph is a graph with good expansion properties: every not-too-large set of vertices has many neighbors. Explicit constructions of spectral expanders were given in \cite{Mar73,lubotzky1988ramanujan} and have been used in numerous works to construct explicit asymptotically good block codes with efficient encoding/decoding, including \cite{alon1992construction,sipser1996expander,guruswami2005linear,barg2005concatenated,roth2006improved}, among others. Building on ideas from \cite{roth2006improved}, we show that one can combine a convolutional code with a block code on spectral expanders to obtain a trellis code over an alphabet of constant size, whose rate-distance trade-off almost matches that of the underlying convolutional code.

	\paragraph*{Paper outline}
	The remainder of the paper is organized as follows. Section~\ref{sec:bounds} presents Singleton-type bounds on the free distance and column distances for trellis codes. Section~\ref{sec:construction} is devoted the construction of trellis codes based on expander graphs and the analysis of their parameters. In Section~\ref{sec:re}, we point out a few future directions.

	\section{Distance bounds for trellis codes}\label{sec:bounds}
	
	In this section, we present distance bounds for trellis codes. We begin with a Singleton-type bound for the free distance of trellis codes.
	
	\begin{theorem}
		The free distance of an $(n,M)$ trellis code $\cC$ over $\Sigma_q$ satisfies
		\begin{align}
			\df(\cC)\leq (n-\log_q M)\left( \left\lfloor\frac{\delta(\cC)}{\log_q M}\right\rfloor+1\right)+\delta(\cC)+1.\label{eq:tsb}
		\end{align}
	\end{theorem}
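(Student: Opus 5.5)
Write $k=\log_q M$ and $\delta=\delta(\cC)$. Fix a generator digraph $\Gamma=(V,E,\psi)$ with $|V|=q^{\delta}$ and initial state $s_0$. The plan is the standard pigeonhole (Singleton-style) argument, run over paths of the trellis rather than over information vectors.

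Set $\ell=\lfloor \delta/k\rfloor+1$, so that $\ell k>\delta$ and hence $M^\ell=q^{\ell k}>q^{\delta}=|V|$. By $M$-regularity there are exactly $M^\ell$ paths of length $\ell$ starting at $s_0$. We want two such paths that diverge and later remerge, and whose label sequences disagree in few positions.

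The first step is to exploit the labels as well as the endpoints. Consider the map sending a length-$\ell$ path $P$ to the pair
\[
\bigl(\text{terminal state of }P,\ \text{first } r \text{ symbols of the concatenated label sequence } \psi(P)\in\Sigma_q^{n\ell}\bigr),
\]
whose image lies in a set of size $q^{\delta}q^{r}$. Choose $r=\lceil \ell k-\delta\rceil-1$, i.e.\ the largest integer with $q^{\delta+r}<q^{\ell k}$. By pigeonhole, two distinct paths $P\neq P'$ share both the terminal state $v$ and the first $r$ label symbols.

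The second step is to extend the two paths identically. By irreducibility there is a path $Q$ from $v$ onward; append the same infinite path continuation $Q$ to both $P$ and $P'$. This yields two infinite paths from $s_0$ that differ only within the first $\ell$ steps. Because $\Gamma$ is lossless and $P\neq P'$ have the same initial and terminal states, their label sequences differ, so the resulting codewords $c\neq c'$ belong to $\cC$. Their labels agree after time $\ell-1$ and on the first $r$ symbols, hence
\[
d(c,c')\le n\ell-r.
\]

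The remaining step is arithmetic: show $n\ell-r\le (n-k)\ell+\delta+1$, i.e.\ $r\ge \ell k-\delta-1$. This holds because $r=\lceil \ell k-\delta\rceil-1\ge \ell k-\delta-1$, and this inequality is exactly \eqref{eq:tsb}.

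The main obstacle is the non-integrality of $k$ and $\delta$. Since $\delta$ is the logarithm of the number of states, it need not be an integer, and the pigeonhole count must produce an integer $r$ while still landing on the bound. The choice of $r$ above handles this, because the ceiling absorbs the fractional part. Note also that requiring $r\ge 0$ uses $\ell k>\delta$, and if $r$ exceeds $n\ell$ the bound is vacuous; I would check these edge cases. A second subtle point is to invoke losslessness rather than determinism to guarantee $c\ne c'$, since the digraph need not be deterministic.
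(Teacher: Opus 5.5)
Your proof is correct and is essentially the paper's argument. The paper pigeonholes only on the terminal state to get a block code of size $\lceil M^\ell/|V|\rceil$ and then invokes the block Singleton bound, whereas you inline that Singleton step as a single joint pigeonhole on the terminal state and the first $r$ label symbols, and you spell out the common-continuation and losslessness step that the paper compresses into ``by definition of $\df(\cC)$''. The edge case $r\ge n\ell$ you flag cannot occur: losslessness gives $M^\ell\le |V|\,q^{n\ell}$, hence $r<\ell k-\delta\le n\ell$. Also, the infinite continuation $Q$ is guaranteed by $M$-regularity rather than by irreducibility.
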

	
	\begin{proof}
		Let $\Gamma=(V,E,\psi)$ be a generator digraph of $\cC$ with initial state $v_0\in V$, and define $\ell$ to be the smallest positive integer such that $M^\ell > |V|$, i.e., $\ell:=\lfloor \log_M|V|\rfloor+1$.
		Consider all length-$\ell$ paths in $\Gamma$ starting at $v_0$. Since $\Gamma$ is $M$-regular, the number of such paths is $M^\ell$. Each of these paths terminates at some state in $V$. Therefore, by the Pigeonhole principle, there exists a state $v_\ell\in V$ that is the terminal state of at least $\lceil M^\ell/|V|\rceil$ of these paths. Next, select a subset of size $\lceil M^\ell/|V|\rceil$ of these length-$\ell$ paths that start at $v_0$ and end at $v_\ell$. The corresponding sequences of edge labels form a block code $\cB$ of length $\ell n$ and size $\lceil M^\ell/|V|\rceil \geq 2$ over $\Sigma_q$. By the Singleton bound for block codes, the minimum distance of $\cB$ satisfies
		\begin{align}
			d(\cB)&\leq \ell n - \log_q\left\lceil \frac{M^\ell}{|V|}\right\rceil+1\nonumber\\
			&\leq \ell n - \ell \log_q {M}+\log_q{|V|}+1\nonumber\\
			&= (n - \log_q{M})\left( \left\lfloor\frac{\log_q|V|}{\log_q M}\right\rfloor+1\right)+\log_q{|V|}+1,\nonumber
		\end{align} where the last step follows by substituting $\ell = \lfloor \log_M|V|\rfloor+1=\lfloor{\log_q|V|}/{\log_q M}\rfloor+1$. By definition of $\df(\cC)$, we have 
		\begin{align*}
			\df(\cC)\leq d(\cB)\leq (n - \log_q{M})\left( \left\lfloor\frac{\log_q|V|}{\log_q M}\right\rfloor+1\right)+\log_q{|V|}+1.
		\end{align*} Furthermore, the above inequality holds for any generator digraph of $\cC$. In particular, applying it to a generator digraph with $\log_q|V|=\delta(\cC)$ yields the bound \eqref{eq:tsb}.
	\end{proof}
	
	We next give an example to show that, for $j\geq 1$, the $j$th column distance of an $(n,M)$ trellis code over $\Sigma_q$ can exceed the bound \eqref{eq:sb-c}, with $k$ replaced by $\log_q M$. For an $(n,M)$ trellis code $\cC$ over $\Sigma_q$, define $\cC_j=\{(c_0,\ldots,c_j)\mid (c_0,c_1,\ldots)\in\cC\}$ to be the block code obtained by truncating $\cC$ to its first $(j+1)n$ symbols. 
	
	\begin{example}\label{ex:large-cd}
		Fix $j\geq 1$.
		Assume $M^j\mid q$ and $(q/M^j)^n=M$. Clearly, such integers $q,M$ exist and satisfy $M\leq q^n$. 
		Let $\{A_{j,t}\}_{1\leq t\leq M^j}$ be a partition of $\Sigma_q$ with $|A_{j,t}|=q/M^j$. For $1\leq t\leq M^j$, define $\cA_{j,t}=A_{j,t}^n$. 
		For $0\leq i\leq j-1$ and $1\leq t\leq M^{i}$, define $\cA_{i,t}$ by choosing exactly one element from each of $\cA_{i+1,(t-1)M+1},\cA_{i+1,(t-1)M+2},\ldots,\cA_{i+1,tM}$. Then $|\cA_{i,t}|=(q/M^j)^n=M$ for all $0\leq i\leq j$ and $1\leq t\leq M^i$. Moreover, for each fixed $i$, the effective alphabets of $\cA_{i,1},\ldots,\cA_{i,M^i}$ are disjoint because  $\{A_{j,t}\}_{1\leq t\leq M^j}$ is a partition of $\Sigma_q$.
		
		For $0\leq i\leq j-1$ and $a\in\bigcup_{s=1}^{M^i}\cA_{i,s}$, let $\Lambda_i(a)$ be the unique set among $\{\cA_{i+1,t}\}_{1\leq t\leq M^{i+1}}$ that contains $a$.
		Define a block code $\cA\subset \Sigma_q^{(j+1)n}$ by 
		\begin{align*}
			\cA=\{(a_0,\ldots,a_j)\mid a_0\in \cA_{0,1}; a_i\in \Lambda_{i-1}(a_{i-1}) \text{ for $1\leq i\leq j$}\}.
		\end{align*} Since $|\cA_{0,1}|=M$ and $\Lambda_{i-1}(a_{i-1})$ has size $M$ for $1\leq i\leq j$, we have $|\cA|=M^{j+1}$.
		Let $\cC$ be an $(n,M)$ trellis code over $\Sigma_q$ with $\cC_j=\cA$. By construction, for any $(a_0,\ldots,a_j),(a'_0,\ldots,a'_j)\in \cC_j$ with $a_0\neq a'_0$, the $(j+1)n$ symbols of $(a_0,\ldots,a_j)$ and $(a'_0,\ldots,a'_j)$ are all distinct. 
		Indeed, for every $0\leq i\leq j$ the $n$-tuples $a_i$ and $a'_i$ lie over disjoint alphabets.
		Thus, $d_j^c(\cC)=(j+1)n$. At the same time, for $n>1$, the bound \eqref{eq:sb-c} with $k$ replaced by $\log_q M$ is $(j+1)n-(j+1)\log_q M+1=(j+1)n-\frac{(j+1)n}{jn+1}+1< (j+1)n$.
	\end{example}
	
	Although the bound \eqref{eq:sb-c} does not hold in general for trellis codes, we show in Theorem~\ref{thm:sb-c-trellis-1} below that if the $j$th column distance is known to be sufficiently small, a bound similar to \eqref{eq:sb-c} still applies.
	
	\begin{theorem}\label{thm:sb-c-trellis-1}
		For each $j\geq 0$, if the $j$th column distance of an $(n,M)$ trellis code $\cC$ over $\Sigma_q$ satisfies $d_j^c(\cC)\leq n$ then it further holds that
		\begin{align}
			d_j^c(\cC)\leq (j+1)(n-\log_q M)+1.\label{eq:tsbc}
		\end{align}
		Moreover, if equality holds for some $j>0$, then $d_i^c(\cC)$ attains the corresponding version of the bound \eqref{eq:tsbc} for $0\leq i<j$.
	\end{theorem}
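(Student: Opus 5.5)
The plan for the first assertion is a Singleton-type puncturing argument on the truncated block code $\cC_j$, with the hypothesis $d_j^c(\cC)\leq n$ used to keep all punctured coordinates inside the first block. Let $\Gamma=(V,E,\psi)$ be a deterministic generator digraph of $\cC$ with initial state $v_0$. Since $\Gamma$ is $M$-regular, there are $M^{j+1}$ paths of length $j+1$ starting at $v_0$. By determinism, two distinct such paths first diverge at some state along distinct outgoing edges, and these carry distinct labels. Hence distinct paths give distinct label sequences, and $|\cC_j|=M^{j+1}$.

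Write $d:=d_j^c(\cC)$. Because $d-1\leq n-1$, we can delete a fixed set $R$ of $d-1$ coordinates, all in the time-$0$ block $c_0$. Let $\pi$ be the projection of $\cC_j$ onto the remaining $(j+1)n-d+1$ coordinates. I claim $\pi$ is injective, and I split into two cases.
\begin{itemize}
\item If $c_0\neq c'_0$, then $(c_0,\ldots,c_j)$ and $(c'_0,\ldots,c'_j)$ differ in at least $d$ positions. At most $d-1$ of these lie in $R$, so the projections differ.
\item If $c_0=c'_0$ and $\pi$ agrees, then the two words agree on all of $c_0$ and on $c_1,\ldots,c_j$, so they are identical.
\end{itemize}
Thus $M^{j+1}\leq q^{(j+1)n-d+1}$. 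Taking $\log_q$ gives $d\leq (j+1)(n-\log_q M)+1$, which is \eqref{eq:tsbc}.

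For the second assertion, first note that $d_i^c(\cC)\leq d_j^c(\cC)\leq n$ for $i<j$, since truncations only lose weight. So the hypothesis holds at every $i<j$, and each $d_i^c(\cC)$ already satisfies its own version of \eqref{eq:tsbc}. It remains to show that a strict inequality at some $i<j$ forces a strict inequality at $j$. Equivalently, I want the ``one step'' estimate
\begin{align*}
d_j^c(\cC)\leq d_i^c(\cC)+(j-i)(n-\log_q M).
\end{align*}
Because the bound has integer-valued left side, I would prove the statement in the integer form: if $d_i^c(\cC)\leq (i+1)(n-\log_q M)$, then $d_j^c(\cC)\leq (j+1)(n-\log_q M)$.

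The route I would try first is a refined counting on $\cC_j$. Delete the same $d_i^c(\cC)-1$ coordinates of $c_0$ as above. If $d_i^c(\cC)$ is strictly below its bound, then $\cC_i$ projected onto its remaining coordinates is injective but strictly not ``full''. Concretely, $q^{(i+1)n-d_i^c(\cC)+1}$ exceeds $M^{i+1}$ by a factor of at least $q^{\log_q M}$ in the sense of the integer slack. I would then project $\cC_j$ onto a set of $(j+1)n-d_i^c(\cC)-(j-i)(n-\log_q M)$ coordinates that keeps blocks $1,\ldots,j$ only partially. The aim is to force, by the pigeonhole principle, two words of $\cC_j$ with $c_0\neq c'_0$ that agree on all chosen coordinates. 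This would exhibit a pair at distance at most the desired bound. Pairs with $c_0=c'_0$ must be excluded from the collision; I would handle this as in the injectivity argument, by using determinism to see that such pairs share a state after time $0$.

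I expect the main obstacle to be exactly this step. In the linear case one uses duality and the structure of sliding generator matrices. Here, instead, the continuations of two close prefixes start from different states $u\neq u'$. A pigeonhole argument within $\cC_j$ does not directly produce a collision across the two continuation sets, and Example~\ref{ex:large-cd} shows that without the hypothesis $d\leq n$ such collisions can genuinely fail to exist.
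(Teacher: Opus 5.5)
Your argument for the first assertion is correct and is the paper's proof in another form. The paper partitions $\cC_j$ by the suffix $(c_1,\ldots,c_j)$ and applies the Singleton bound to each fiber of first blocks. Each fiber has length $n$ and minimum distance at least $d_j^c(\cC)$, and this is where $d_j^c(\cC)\le n$ enters. Deleting $d_j^c(\cC)-1$ coordinates of block $0$ is exactly that bound applied uniformly to all fibers.

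\textbf{The gap.} For the second assertion you reduce to the one-step estimate $d_j^c(\cC)\le d_i^c(\cC)+(j-i)(n-\log_q M)$, or its integer form, and then leave that step open. It cannot be closed: both forms are false for trellis codes, and so is the second assertion itself.

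\textbf{Counterexample.} Take $q=8$ with alphabet $\mathbb{Z}_8$, $n=3$ and $M=64$, so $\log_q M=2$.
\begin{itemize}
\item Write $\mathbb{Z}_8=S_0\sqcup S_1$ with $S_0=\{0,1,2,3\}$. For $\epsilon\in\{0,1\}^3$ put $O_\epsilon=S_{\epsilon_1}\times S_{\epsilon_2}\times S_{\epsilon_3}$. These eight ``orthants'' have size $64$ each and partition $\mathbb{Z}_8^3$.
\item Let $\epsilon(0),\ldots,\epsilon(3)$ be $(0,0,0),(0,0,1),(0,1,0),(1,0,0)$, and set $\epsilon(h+4)=(1,1,1)-\epsilon(h)$.
\item The states are $v_0,s_0,\ldots,s_7$. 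For each $(t,h)\in\mathbb{Z}_8^2$ there is an edge $v_0\to s_h$ labeled $(t,t+h,t+2h)$. From each $s_h$ there is one edge to $v_0$ for each label in $O_{\epsilon(h)}$.
\end{itemize}
This digraph is irreducible, deterministic and $64$-regular.

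Now compare two paths whose first blocks differ, coming from groups $h$ and $h'$.
\begin{itemize}
\item If $h'=h$, the first blocks differ in all three coordinates.
\item If $h'\notin\{h,h+4\}$, the first blocks differ in at least two coordinates, since two agreeing coordinates force $h'=h$ or $2h'=2h$. The continuations lie in distinct orthants, so they differ in at least one more.
\item If $h'=h+4$, the continuations lie in complementary orthants and differ in all three coordinates.
\end{itemize}
So every such pair has distance at least $3$ over the first two blocks. Equality is attained with equal $h$ and equal continuations, hence $d_1^c(\cC)=3=2(n-\log_q M)+1\le n$. On the other hand $d_0^c(\cC)=d((0,0,0),(0,4,0))=1<(n-\log_q M)+1=2$.

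This is exactly the obstruction you anticipated: two close prefixes end in states whose label sets are far apart, and the hypothesis $d_j^c(\cC)\le n$ does not rule this out.

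The paper's own argument does not supply the missing idea either. Dividing $M^{j+1}=q^{(j+1)n-d_j^c(\cC)+1}$ by $M^j<q^{jn-d_{j-1}^c(\cC)+1}$ gives only $M>q^{\,n-(d_j^c(\cC)-d_{j-1}^c(\cC))}$, not $M>q^n$; in the example this reads $64>8$.

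The one-step bound is a linear phenomenon. For convolutional codes one chooses the next input so that $k$ coordinates of the next block vanish, which uses that $G_0$ has full rank. A general deterministic trellis has no analogue of this. Consequently only the first assertion of the statement holds as written.
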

	\begin{proof}
		Since $\cC$ admits a deterministic generator digraph, the block code $\cC_j$ obtained by truncating $\cC$ to its first $(j+1)n$ symbols has size $|\cC_j|=M^{j+1}$. 
		For each $a\in\Sigma_q^{jn}$, let $\cC_j(a)=\{c_0\mid (c_0,\ldots,c_j)\in\cC_j,(c_1,\ldots,c_j)=a\}$ be the set of length-$n$ prefixes of the codewords in $\cC_j$ with the length-$(jn)$ suffix equal to $a$. 
		If $|\cC_j(a)|\geq 2$, by the definition of the $j$th column distance, the minimum distance $d(\cC_j(a))$ of $\cC_j(a)$ satisfies $n\geq d(\cC_j(a))\geq d_j^c(\cC)$. Moreover, by the Singleton bound for block codes, $|\cC_j(a)|\leq q^{n-d_j^c(\cC)+1}$.
		Noticing that $\{\cC_j(a)\mid a\in\Sigma_q^{jn}\}$ induces a partition on $\cC_j$, we further obtain
		\begin{align*}
			M^{j+1}=|\cC_j|= \sum_{a\in\Sigma_q^{jn}}|\cC_j(a)|\leq q^{jn}\cdot q^{n-d_j^c(\cC)+1}=q^{(j+1)n-d_j^c(\cC)+1}.
		\end{align*}
		Hence, $d_j^c(\cC)\leq (j+1)(n-\log_q M)+1$. 
		Now suppose that equality in \eqref{eq:tsbc} holds, i.e., 
		$|\cC_j|=M^{j+1}=q^{(j+1)n-d_j^c(\cC)+1}$ for some $j>0$, but $|\cC_{j-1}|=M^j<q^{jn-d_{j-1}^c(\cC)+1}$. Then
		\begin{align*}
			\frac{|\cC_j|}{|\cC_{j-1}|}=M> q^{n}.
		\end{align*} However, since $\cC$ admits a deterministic generator digraph, the $M$ outgoing edges of any state in $V$ must carry distinct labels in $\Sigma_q^n$. Therefore, $M\leq q^n$, contradicting $M> q^{n}$.
	\end{proof}
	
	More generally, by a straightforward application of the Singleton bound for block codes, we have the following bound on the column distances of trellis codes.
	\begin{theorem}
		For each $j\geq 0$, the $j$th column distance of an $(n,M)$ trellis code $\cC$ over $\Sigma_q$ satisfies
		\begin{align}
			d_j^c(\cC)\leq (j+1)n-\log_qM+1.\label{eq:tsbc-2}
		\end{align}
		Moreover, if equality holds for some $j>0$, then $d_i^c(\cC)$ attains the corresponding version of the bound \eqref{eq:tsbc-2} for $0\leq i<j$.
	\end{theorem}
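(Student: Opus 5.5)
The bound itself comes from one application of the Singleton bound to a suitable block code extracted from $\cC_j$. As in the proof of Theorem~\ref{thm:sb-c-trellis-1}, fix a deterministic generator digraph $\Gamma=(V,E,\psi)$ with initial state $v_0$. Consider the $M$ outgoing edges of $v_0$; they carry distinct labels $c_0$. For each such edge, pick any one path of length $j+1$ from $v_0$ that begins with it, which is possible because $\Gamma$ is $M$-regular. Reading off labels gives $M$ truncated codewords $(c_0,\ldots,c_j)\in\cC_j$ with pairwise distinct first blocks, i.e.\ a block code $\cB$ of length $(j+1)n$ and size $M$ over $\Sigma_q$. Any two of its words come from codewords of $\cC$ with $c_0\neq c_0'$, so by the definition of the column distance $d(\cB)\geq d_j^c(\cC)$. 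The Singleton bound $M\leq q^{(j+1)n-d(\cB)+1}$ then gives $d_j^c(\cC)\leq (j+1)n-\log_q M+1$, which is \eqref{eq:tsbc-2}. If $M=1$ there is nothing to prove, and the minimum is vacuous by convention.

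For the equality statement I would argue by the contrapositive. Since the definition of $d_i^c$ only sees the first $i+1$ blocks, $d_j^c(\cC)\leq d_{j-1}^c(\cC)+n$: take a pair achieving $d_{j-1}^c$ and note that block $j$ adds at most $n$. Suppose $d_j^c(\cC)=(j+1)n-\log_q M+1$ for some $j>0$. Then
\begin{align*}
d_{j-1}^c(\cC)\geq d_j^c(\cC)-n=jn-\log_q M+1,
\end{align*}
while the first part applied at index $j-1$ gives the reverse inequality. Hence equality holds at $j-1$, and induction downward yields equality for all $0\leq i<j$.

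The main point requiring care is the monotonicity step $d_j^c\leq d_{j-1}^c+n$. This needs the minimizing pair at level $j-1$ to extend to full codewords of $\cC$, which is automatic because the minimum is over codewords of $\cC$ and truncations of the same pair are compared. It also needs the minima to be attained, which holds because only finitely many truncations exist. Everything else is a routine use of the Singleton bound.
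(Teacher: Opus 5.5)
Your proof is correct and follows the paper's argument. You apply the Singleton bound to $M$ truncated codewords with pairwise distinct first blocks, and you handle the equality part through $d_j^c(\cC)\leq d_{j-1}^c(\cC)+n$. Choosing paths with distinct first edges makes precise what the paper phrases loosely as ``the $M$ distinct paths of length $j+1$''.
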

	\begin{proof}
		Since $\cC$ admits a deterministic generator digraph, we may take $M$ codewords from $\cC_j$ that correspond to the $M$ distinct paths of length $j+1$ starting from the initial state of the digraph. By definition of the $j$th column distance, the minimum Hamming distance between distinct pairs of these $M$ codewords is no smaller than the $j$th column distance of $\cC$. At the same time, these $M$ codewords form a block code length $(j+1)n$ and size $M$ over $\Sigma_q$, and therefore, the bound \eqref{eq:tsbc-2} follows. Now if $d_j^c(\cC)= (j+1)n-\log_qM+1$ but $d_{j-1}^c(\cC)<jn-\log_qM+1$ where $j>0$, then we have $d_j^c(\cC)-d_{j-1}^c(\cC)>n$, which contradicts the fact that the column distances can increase by at most $n$ from one time instant to the next.
	\end{proof}
	
	\begin{remark}
		We note that the bound \eqref{eq:tsbc-2} is in fact tight. Fix $j\geq 1$. For $(n,M)$ trellis codes $\cC$ over $\Sigma_q$ with $M^j\mid q$ and $(q/M^j)^n=M$, the bound \eqref{eq:tsbc-2} gives $d_j^c(\cC)\leq (j+1)n-\frac{n}{jn+1}+1$. As $d_j^c(\cC)$ is an integer, the bound \eqref{eq:tsbc-2} in effect gives $d_j^c(\cC)\leq (j+1)n$. At the same time, the $j$th column distance of the $(n,M)$ trellis code in Example~\ref{ex:large-cd} is equal to $(j+1)n$.
	\end{remark}

	\section{Trellis codes on expander graphs}\label{sec:construction}
	In this section, we present a construction of trellis codes based on bipartite spectral expanders. 
	We begin by reviewing some basic concepts and outlining the high-level ideas behind the construction.
	A graph is $\Delta$-\emph{regular} if every vertex in the graph has degree $\Delta$. Moreover, a graph on $n$ vertices is said to be a $(\Delta,\gamma)$ \emph{spectral expander} if the graph is $\Delta$-regular and the second largest eigenvalue (in absolute value) of its adjacency matrix is at most $\Delta\gamma\leq \Delta$. As smaller values of $\gamma$ implies better expansion properties, in many applications of spectral expanders, one requires $\gamma$ to be vanishing as $\Delta$ grows. Remarkably, this can be accomplished by explicit constructions, notably Ramanujan graphs \cite{lubotzky1988ramanujan}. For coding-theoretic applications, one often further requires that the Ramanujan graphs are bipartite. 
	
	Using bipartite Ramanujan graphs, Roth and Skachek \cite{roth2006improved} showed that one can construct asymptotically good block codes from two short block linear codes that preserves almost the same rate-distance trade-off as one of them, provided that the other has rate close to one. Generalizing this idea, our construction below is also defined via bipartite Ramanujan graphs, with the two constituent codes taken to be a convolutional code and a block linear code with rate close to one. For a convolutional code with generator matrix $G(D)=\sum_{j=0}^{m}G_jD^j$, one natural idea for combining it with a block code on a bipartite spectral expander is to combine, for each $j$, the code generated by $G_j$ with the block code. To preserve the rate-distance trade-off of the convolutional code, however, the key challenge is to maintain the correlations among code blocks across different time instants. We address this by using a sequence of isomorphic bipartite spectral expanders and enforcing a consistent total ordering on the edges of each expander. This consistency preserves the desired inter-time correlations, and in turn enables us to show the relative column distances of the resulting trellis codes constructed on the expanders are close to those of the underlying convolutional code.
	
	\subsection{Code construction}
	We first introduce the notation for the graphs underlying our construction.
	For an integer $a$, define $\pz{a}=\{1,2,\ldots,a\}$ and $\nz{a}=\{0,1,\ldots,a\}$.
	Let $\cG_0=(U_0,V_0,E_0)$ be a $\Delta$-regular balanced bipartite graphs on $2n$ vertices. 
	Assume an arbitrary but fixed total order on the set of edges $E_0$.
	Let $\cG_1=(U_1,V_1,E_1),\cG_2=(U_2,V_2,E_2),\ldots$ be a sequence of isomorphic copies of the graph $\cG_0$ such that the vertex sets $U_0\cup V_0,U_1\cup V_1,\ldots$ are disjoint.
	In other words, $\cG_0,\cG_1,\ldots$ refer to the same graph but the vertex labels are all distinct. 
	For $j\geq 0$, write $U_j=\{u_{j,1},\ldots,u_{j,n}\}$ and $V_j=\{v_{j,1},\ldots,v_{j,n}\}$. For ease of exposition, we further assume the graphs $\cG_1,\cG_2,\ldots$ are such that $(u_{j,s},v_{j,t})\in E_{j}$ if and only if $(u_{0,s},v_{0,t})\in E_{0}$, where $j\geq 1$ and $s,t\in\pz{n}$. Moreover, for each $j\geq 1$ we assume an ordering on $E_j$ that is \emph{consistent} with $E_0$ in the sense that for any two edges $(u_{j,s},v_{j,t}),(u_{j,s'},v_{j,t'})$ in $E_j$, $(u_{j,s},v_{j,t})$ precedes $(u_{j,s'},v_{j,t'})$ if and only if $(u_{0,s},v_{0,t})$ precedes $(u_{0,s'},v_{0,t'})$ in $E_0$, where $s,s',t,t'\in\pz{n}$.
	For a vertex $w\in U_j\cup V_j$ with $j\geq 0$, denote by $\cE(w)$ the set of edges incident with $w$. Since $\cG_j$ is $\Delta$-regular, we have $|\cE(w)|=\Delta$ for all $w\in U_j\cup V_j$. Moreover, for any subset $W\subset \bigcup_{j=0}^{m}(U_j\cup V_j)$, define $\cE(W)=\bigcup_{w\in W}\cE(w)$.
	
	We next define the notion of {relative weight} that will be useful later. Let $c\in \ff_q^n$ be a length-$n$ vector over $\ff_q$.
	The \emph{relative weight} of $c$ is defined to be $\rw(c):=\wt(c)/n$. For any nonempty subset $S\subset \ff_q^n$, the \emph{average relative weight} of $S$ is defined to be
	\begin{align*}
		\arw(S)=\frac{\sum_{c\in S}\rw(c)}{|S|}.
	\end{align*}
	Define $\arw(\emptyset)=0$.
	
	We are now ready to describe the construction.
	Let $G(D)=\sum_{j=0}^{m}G_jD^j$ be a generator matrix for a $(\Delta,k)$ convolutional code $\cC$ over $\ff_q$ where $G_0$ has full rank. Let $\cB_1$ be an $((m+1)\Delta, k)$ block linear code over $\ff_q$ defined by the generator matrix $(G_0,\ldots,G_m)$. 
	Let $\cB_2$ be a $(\Delta, r\Delta)$ block linear code with rate $r$ and relative minimum distance $\theta$.
	Define a block code of length $n(m+1)\Delta$ over $\ff_q$ whose coordinates are indexed by the set of edges $\bigcup_{j=0}^{m}E_j$:
	\begin{align}
		\cB &:=\Big\{c\in \ff_q^{n(m+1)\Delta} \mid  c_{\cE(v_{0,s},\ldots,v_{m,s})}\in\cB_1,c_{\cE(u_{j,s})}\in \cB_2,s\in\pz{n},j\in\nz{m}\Big\}\label{eq:def:B}\\
		&=\Big\{c\in \ff_q^{n(m+1)\Delta} \mid  c_{\cE(v_{0,s},\ldots,v_{m,s})}\in\cB_1,c_{\cE(u_{0,s},\ldots,u_{m,s})}\in \cB_2^{m+1},s\in\pz{n}\Big\},\nonumber
	\end{align} where $\cB_2^{m+1}$ is the $(m+1)$th Cartesian power of $\cB_2$.
	In other words, for any $c\in\cB$, the projection of $c$ onto the coordinates indexed by edges incident with $\{v_{j,s}\mid j\in\nz{m}\}$ is a codeword of $\cB_1$, and the projection onto the coordinates indexed by edges incident with $u_{j,s}$ is a codeword of $\cB_2$.
	It is easy to verify that $\cB$ is an $\ff_q$-linear code. In fact, we have $\cB=\cB^{(1)}\cap\cB^{(2)}$ where 
	\begin{align}
		\cB^{(1)}&:= \Big\{c\in \ff_q^{n(m+1)\Delta} \mid  c_{\cE(v_{0,s},\ldots,v_{m,s})}\in\cB_1,s\in\pz{n} \Big\},\label{eq:bc1}\\
		\cB^{(2)} &:= \Big\{c\in \ff_q^{n(m+1)\Delta} \mid c_{\cE(u_{0,s},\ldots,u_{m,s})}\in \cB_2^{m+1},s\in\pz{n}\Big\}.\label{eq:bc2}
	\end{align}
	
	Since we will define a convolutional code over $\ff_q$ of length $n\Delta$ and memory $m$ using a generator matrix of $\cB$, the following proposition will be helpful.
	\begin{proposition}\label{prop:B}
		Let $\tilde{k}$ denote the dimension of $\cB$.
		The code $\cB$ has a generator matrix in the form $(\tilde{G}_0,\ldots,\tilde{G}_m)$ where $\tilde{G}_j$ is a $\tilde{k}\times n\Delta$ matrix over $\ff_q$ with the columns being indexed (and ordered) by the edges in $E_j$ for $j\in\nz{m}$. Moreover, $\tilde{G}_0$ has full rank.
	\end{proposition}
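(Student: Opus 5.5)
The plan is to take an arbitrary generator matrix of $\cB$, regroup its columns, and then show that $\tilde G_0$ has full rank by proving that no nonzero codeword of $\cB$ vanishes on the coordinates indexed by $E_0$.

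\emph{Block form.} Fix any $\tilde{k}\times n(m+1)\Delta$ generator matrix of $\cB$, with columns indexed by $\bigcup_{j=0}^m E_j$. The sets $E_0,\ldots,E_m$ are pairwise disjoint, since the vertex sets of the $\cG_j$ are disjoint, and $|E_j|=n\Delta$ because each $\cG_j$ is $\Delta$-regular bipartite with $n$ vertices per side. So I will permute the columns: first the columns indexed by $E_0$, in the fixed order, then those indexed by $E_1$ in the consistent order, and so on. A column permutation is just a relabelling of coordinates, so the result is still a generator matrix of $\cB$, and it has the form $(\tilde G_0,\ldots,\tilde G_m)$ with $\tilde G_j$ of size $\tilde k\times n\Delta$.

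\emph{Full rank of $\tilde G_0$.} It suffices to show that $\tilde G_0$ has trivial left kernel: if $y\tilde G_0=0$ with $y\in\ff_q^{\tilde k}$, then $y=0$. Put $c=y(\tilde G_0,\ldots,\tilde G_m)\in\cB$, so $c_{E_0}=0$, and I will show $c=0$; since $(\tilde G_0,\ldots,\tilde G_m)$ has full row rank $\tilde k$, this gives $y=0$.

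Fix $s\in\pz{n}$. By \eqref{eq:def:B}, $c_{\cE(v_{0,s},\ldots,v_{m,s})}\in\cB_1$, so there is $x\in\ff_q^k$ with $c_{\cE(v_{0,s},\ldots,v_{m,s})}=x(G_0,\ldots,G_m)$. Here the $\Delta$ coordinates $\cE(v_{j,s})$ are matched with the columns of $G_j$, and the matching is the same for every $j$; the isomorphism of the $\cG_j$ and the consistent orderings make this well defined.

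The part of $c$ on $\cE(v_{0,s})\subset E_0$ is zero, so $xG_0=0$. Since $G_0$ is a $k\times\Delta$ matrix of full rank $k$, this forces $x=0$. Hence $c$ vanishes on $\cE(v_{j,s})$ for every $j\in\nz{m}$.

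Finally, every edge of $E_j$ is incident with exactly one vertex of $V_j$, so the sets $\cE(v_{j,s})$, for $j\in\nz{m}$ and $s\in\pz{n}$, cover all coordinates. Therefore $c=0$ and $y=0$.

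\emph{Main obstacle.} The only delicate point is bookkeeping rather than algebra. The statement $c_{\cE(v_{0,s},\ldots,v_{m,s})}\in\cB_1$ only makes sense once the $(m+1)\Delta$ coordinates are ordered so that the block $\cE(v_{j,s})$ lines up with the columns of $G_j$. I will make this explicit: order $\cE(v_{j,s})$ by the consistent order on $E_j$, so that it corresponds to the ordered set $\cE(v_{0,s})$. Then the "$E_0$-part" of a $\cB_1$-codeword is exactly $xG_0$, and the argument above goes through.
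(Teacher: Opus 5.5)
Your proposal is correct and is essentially the paper's argument: both reduce the full rank of $\tilde{G}_0$ to showing that a codeword $c\in\cB$ with $c_{E_0}=0$ must be zero, and both obtain this from the $\cB_1$-constraint at each $s\in\pz{n}$, where $c_{\cE(v_{0,s})}=xG_0=0$ and the full rank of $G_0$ force $x=0$ and hence $c_{\cE(v_{j,s})}=0$ for all $j\in\nz{m}$. Your explicit column regrouping and your alignment of $\cE(v_{j,s})$ with the columns of $G_j$ fill in bookkeeping that the paper leaves implicit.
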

	
	\begin{proof}
		It is clear that by definition $\cB$ is $\ff_q$-linear code of length $n(m+1)\Delta$, and thus it has a $\tilde{k}\times n(m+1)\Delta$ generator matrix over $\ff_q$ in the form of  $(\tilde{G}_0,\ldots,\tilde{G}_m)$. To see that $\tilde{G}_0$ has full rank, it suffices to show that the restriction of any nonzero codeword $c\in\cB\setminus\{0\}$ to the coordinates indexed by $E_0$ is nonzero, i.e., $c_{E_0}\neq 0$. Let $c\in\cB\setminus\{0\}$. If $c_{E_0}\neq 0$ there is nothing to prove. So let us further assume that $c_{E_0}=0$. It follows that $c_{\cE(v_{0,s})}=0$ for all $s\in\pz{n}$. Since $G_0$ has full rank, by the construction of $\cB$, it further follows that $c_{\cE(v_{j,s})}=0$ for $j\in\pz{m}$ and $s\in\pz{n}$, contradicting that $c\neq 0$.
	\end{proof}
	
	Define an $(n\Delta,\tilde{k})$ convolutional code $\tilde{\cC}$ over $\ff_q$ with generator matrix $\tilde{G}(D):=\sum_{j=0}^m\tilde{G}_jD^j$ where $\tilde{G}_0,\ldots,\tilde{G}_m$ are as in Proposition~\ref{prop:B}. We next describe the construction of a trellis code over $\ff_{q^{r\Delta}}$ from $\tilde{\cC}$.
	
	Let $c(D)=\sum_{j=0}^{\infty}c_j D^j\in \ff_{q}^{n\Delta}[D]$ be a codeword in $\tilde{\cC}$.
	Note that for each $j\geq 0$, we have $c_j=x_j\tilde{G}_0+x_{j-1}\tilde{G}_1+\cdots+x_{j-m}\tilde{G}_m\in\ff_q^{n\Delta}$ for some $x_j,x_{j-1},\ldots,x_{j-m}\in\ff_q^{\tilde{k}}$ with $x_{j-i}=0$ if $j-i<0$. 
	Since the columns of $\tilde{G}_0,\ldots,\tilde{G}_m$ are indexed by $E_0,\ldots,E_m$, respectively, and the orderings on the sets $E_j,j\geq 0$ are all consistent, we may simply view the columns of $\tilde{G}_i,i\in\nz{m}$ as indexed by $E_j$ for any $j\geq 0$. This viewpoint will be crucial for our subsequent discussion. In particular, it allows us to regard the coordinates of $c_j\in\ff_q^{n\Delta}$ as indexed by $E_j$, i.e., as lying on the edges of $\cG_j$ for each $j\geq 0$.
	Moreover, as a consequence of the consistent ordering of the edges, we have the following claims, whose proofs are given in the Appendix.
	
	\begin{claim}\label{cl:trellis-B2}
		Let $c(D)=\sum_{j=0}^{\infty}c_j D^j\in \ff_{q}^{n\Delta}[D]$ be a codeword in $\tilde{\cC}$. Then $(c_j)_{\cE(u_{j,s})}\in\cB_2$ for all $s\in\pz{n}$.
	\end{claim}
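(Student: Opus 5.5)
The approach is linearity plus the consistent edge ordering. By linearity of both $\tilde{\cC}$ and $\cB_2$, it suffices to show that every summand of $c_j$ satisfies the constraint. Write $c_j=\sum_{i=0}^{m}x_{j-i}\tilde{G}_i$ with $x_{j-i}=0$ for $j-i<0$. Then it is enough to prove that, for every $i\in\nz{m}$, every $x\in\ff_q^{\tilde{k}}$ and every $s\in\pz{n}$, the restriction of the row vector $x\tilde{G}_i\in\ff_q^{n\Delta}$ to the coordinates indexed by $\cE(u_{j,s})$ lies in $\cB_2$. Here the coordinates of $x\tilde{G}_i$ are read as lying on $E_j$ via the consistent ordering.

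The key step is to translate the index set $\cE(u_{j,s})\subset E_j$ back to $E_i$. By assumption, $(u_{j,s},v_{j,t})\in E_j$ if and only if $(u_{0,s},v_{0,t})\in E_0$, if and only if $(u_{i,s},v_{i,t})\in E_i$. Moreover, the orderings on $E_j$ and $E_i$ are both consistent with that on $E_0$. Hence the position of the edge $(u_{j,s},v_{j,t})$ in the ordering of $E_j$ equals the position of $(u_{i,s},v_{i,t})$ in the ordering of $E_i$. Under the identification of column indices, the coordinate set $\cE(u_{j,s})$ of $x\tilde{G}_i$ therefore corresponds exactly to $\cE(u_{i,s})\subset E_i$, with the same internal order. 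This matters because $\cB_2$ is a code with ordered coordinates. So $(x\tilde{G}_i)_{\cE(u_{j,s})}$, in the $E_j$-indexing, equals $(x\tilde{G}_i)_{\cE(u_{i,s})}$, in the native $E_i$-indexing.

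Next, consider the codeword $b:=x(\tilde{G}_0,\ldots,\tilde{G}_m)\in\cB$ given by Proposition~\ref{prop:B}. Its block on $E_i$ is exactly $x\tilde{G}_i$. By the definition \eqref{eq:def:B} of $\cB$, we have $b_{\cE(u_{i,s})}\in\cB_2$ for all $s\in\pz{n}$ and $i\in\nz{m}$. Combined with the previous paragraph, this gives $(x\tilde{G}_i)_{\cE(u_{j,s})}\in\cB_2$. Summing over $i$ and using linearity of $\cB_2$ yields $(c_j)_{\cE(u_{j,s})}\in\cB_2$.

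I expect the main obstacle to be making the index bookkeeping rigorous. One must check that the order in which the $\Delta$ edges of $\cE(u_{j,s})$ are fed into $\cB_2$ matches the order used for $\cE(u_{i,s})$. This follows since both inherit the order of $\cE(u_{0,s})$ in $E_0$. I would state this explicitly as a small bijection lemma $\phi_{i\to j}\colon E_i\to E_j$, $(u_{i,s},v_{i,t})\mapsto(u_{j,s},v_{j,t})$, which is order preserving and maps $\cE(u_{i,s})$ onto $\cE(u_{j,s})$.
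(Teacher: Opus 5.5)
Your proposal is correct and follows essentially the same route as the paper's proof. You decompose $c_j=\sum_{i}x_{j-i}\tilde{G}_i$, use the consistent ordering to identify $(x_{j-i}\tilde{G}_i)_{\cE(u_{j,s})}$ with $(x_{j-i}\tilde{G}_i)_{\cE(u_{i,s})}$, get membership in $\cB_2$ from the definition of $\cB$ via Proposition~\ref{prop:B}, and sum using linearity of $\cB_2$. Your explicit order-preserving bijection $E_i\to E_j$ simply makes precise the column relabeling that the paper states informally.
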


	\begin{claim}\label{cl:trellis-C}
		Let $c(D)=\sum_{j=0}^{\infty}c_j D^j\in \ff_{q}^{n\Delta}[D]$ be a codeword in $\tilde{\cC}$. Then $((c_0)_{\cE(v_{0,s})},\ldots,(c_j)_{\cE(v_{j,s})})$ is a codeword in $\cC$ truncated at time instant $j$ where $j\geq 0$ and $s\in\pz{n}$.
	\end{claim}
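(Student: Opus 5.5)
We have to prove Claim~\ref{cl:trellis-C}. Fix $s\in\pz{n}$ and $j\geq 0$, and write $c_t=x_t\tilde{G}_0+x_{t-1}\tilde{G}_1+\cdots+x_{t-m}\tilde{G}_m$, with $x_i=0$ for $i<0$. The idea is to decompose each $c_t$ into contributions from the individual data vectors $x_{t-i}$ and to check, vertex by vertex, that each contribution is a slice of a codeword of $\cB_1$ of the right shape. Summing these contributions will then reproduce the convolutional encoding by $G(D)$.

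\emph{Step 1 (row-by-row slices).} For each $i$, let $b^{(i)}:=x_i(\tilde{G}_0,\ldots,\tilde{G}_m)\in\cB$. By the definition \eqref{eq:def:B} of $\cB$, for every $s$ the restriction $b^{(i)}_{\cE(v_{0,s},\ldots,v_{m,s})}$ lies in $\cB_1$. Hence there is a $y_{i,s}\in\ff_q^k$ with
\[
\bigl(b^{(i)}_{\cE(v_{0,s})},\ldots,b^{(i)}_{\cE(v_{m,s})}\bigr)=y_{i,s}(G_0,\ldots,G_m).
\]
This requires that the coordinates of $\cE(v_{0,s},\ldots,v_{m,s})$ be arranged blockwise in the order $\cE(v_{0,s}),\ldots,\cE(v_{m,s})$, with each block ordered by the total order on $E_0$ (transported to $E_\ell$ by consistency). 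I will state this arrangement explicitly as the convention implicit in \eqref{eq:def:B}. With it, the block $b^{(i)}_{\cE(v_{\ell,s})}$ equals $y_{i,s}G_\ell$, and $b^{(i)}_{\cE(v_{\ell,s})}=x_i(\tilde{G}_\ell)_{\cE(v_{\ell,s})}$, where the columns of $\tilde{G}_\ell$ are indexed by $E_\ell$.

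\emph{Step 2 (transport across time via consistency).} This is the key point and the main obstacle. The coordinates of $c_t$ are read on $E_t$, whereas $x_{t-\ell}\tilde{G}_\ell$ naturally lives on $E_\ell$. I need the set of positions of $\cE(v_{\ell,s})$ within $E_\ell$, under the ordering, to coincide with the positions of $\cE(v_{t,s})$ within $E_t$. This follows from the assumption that $(u_{j,s'},v_{j,s})\in E_j$ if and only if $(u_{0,s'},v_{0,s})\in E_0$, together with the consistent ordering: the map $E_\ell\to E_t$ that preserves index pairs is order-preserving and sends $\cE(v_{\ell,s})$ onto $\cE(v_{t,s})$, preserving the internal order. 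Hence
\[
(c_t)_{\cE(v_{t,s})}=\sum_{\ell=0}^{m}x_{t-\ell}(\tilde{G}_\ell)_{\cE(v_{\ell,s})}=\sum_{\ell=0}^m y_{t-\ell,s}G_\ell,
\]
with $y_{i,s}=0$ whenever $x_i=0$, by linearity.

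\emph{Step 3 (conclude).} The previous display says exactly that the sequence $\bigl((c_t)_{\cE(v_{t,s})}\bigr)_{t\geq0}$ is the codeword $y_s(D)G(D)\in\cC$, where $y_s(D)=\sum_i y_{i,s}D^i$. Truncating at time $j$ gives the claim. The only care required is the bookkeeping of orderings in Steps 1 and 2; the algebra itself is a direct linearity computation.
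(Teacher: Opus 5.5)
Your proof is correct and follows essentially the same route as the paper. You decompose $(c_0,\ldots,c_j)$ into the contributions of the individual $x_i$, use the consistent edge ordering to transport the slice on $\cE(v_{t,s})$ back to $\cE(v_{t-i,s})$, and recognize each contribution as a shifted, truncated codeword of $\cB_1$, so the sum is a truncated codeword of $\cC$. Writing each $\cB_1$-slice as $y_{i,s}(G_0,\ldots,G_m)$ and summing to $y_s(D)G(D)$ simply spells out the final step that the paper states without detail.
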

	
	Let $\phi\colon\cB_2\to\ff_{q^{r\Delta}}$ be an $\ff_q$-linear bijection. By Claim~\ref{cl:trellis-B2} we have $(c_j)_{\cE(u_{j,s})}\in\cB_2$ and so $\phi((c_j)_{\cE(u_{j,s})})\in \ff_{q^{r\Delta}}$ for $s\in\pz{n}$. Let $C_j=\phi^n(c_j):=(\phi((c_j)_{\cE(u_{j,1})}),\ldots,\phi((c_j)_{\cE(u_{j,n})})\in \ff_{q^{r\Delta}}^n$ be a length-$n$ vector over $\ff_{q^{r\Delta}}$. 
	
	Finally, define a trellis code $\tilde{\cC}_{\phi}\subset \ff_{q^{r\Delta}}[D]^n$ by
	\begin{align*}
		\tilde{\cC}_{\phi}=\Big\{\sum_{i=0}^{\infty}C_iD^i \mid C_i=\phi^n(c_i),\sum_{j=0}^{\infty}c_jD^j\in\tilde{\cC}\Big\}.
	\end{align*}
	Note that since $\phi$ is $\ff_q$-linear, $\tilde{\cC}_{\phi}$ is an $\ff_q$-linear trellis code over $\ff_{q^{r\Delta}}$. 
	
	\begin{remark}
		We note that the alphabet size $q^{r\Delta}$ of the trellis code $\tilde{\cC}_{\phi}$ is a constant independent of $n$. Indeed, $r$ and $\Delta$ do not depend on $n$, and $q$ depends only on the code parameters of $\cC$ and $\cB_2$, all of which are independent of $n$. 
	\end{remark}
	
	\subsection{Code parameters}
	In this subsection we discuss the column distances, size and degree of the trellis code $\tilde{\cC}_{\phi}$. Omitted proofs of the results in this subsection can be found in the Appendix.
	
	\begin{lemma}[The column distances of $\tilde{\cC}_{\phi}$]\label{le:cd}
		Let $\cG_0$ be a $(\Delta,\gamma)$ spectral expander. Then the $j$th column distance $d_j^c(\tilde{\cC}_{\phi})$ of the trellis code $\tilde{\cC}_{\phi}$ satisfies 
		\begin{align*}
			\frac{d_j^c(\tilde{\cC}_{\phi})}{(j+1)n}\geq 
			\frac{\min_{0\leq i\leq j}\Big\{\frac{d_i^c(\cC)}{(i+1)\Delta}\Big\}-
				\gamma\sqrt{\theta^{-1}}}{1-\gamma}.
		\end{align*}
	\end{lemma}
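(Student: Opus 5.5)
The plan is to adapt the Roth--Skachek expander-mixing argument to each time slice $\cG_t$, and to use Claims~\ref{cl:trellis-B2} and~\ref{cl:trellis-C} to transfer weight from the convolutional code $\cC$ on the $V$-side to the symbols of $\tilde{\cC}_{\phi}$ on the $U$-side.

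\textbf{Step 1: reduction to a single codeword.} Since $\phi$ is $\ff_q$-linear, $\tilde{\cC}_{\phi}$ is $\ff_q$-linear. Hence $d_j^c(\tilde{\cC}_{\phi})$ is the minimum of $\sum_{t=0}^{j}\wt(C_t)$ over codewords with $C_0\neq 0$. Because $\phi$ is a bijection, $C_0\neq 0$ if and only if $c_0\neq 0$.

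Fix such a codeword $c(D)\in\tilde{\cC}$ and, for each $t\in\nz{j}$, define
\begin{align*}
S_t&=\{u_{t,s}: (c_t)_{\cE(u_{t,s})}\neq 0\},\\
T_t&=\{v_{t,s}: (c_t)_{\cE(v_{t,s})}\neq 0\}.
\end{align*}
Then $\wt(C_t)=|S_t|$, and every nonzero coordinate of $c_t$ lies on an edge of $\cG_t$ between $S_t$ and $T_t$. Let $e_t$ denote the number of nonzero coordinates of $c_t$, and write $a_t=|S_t|/n$, $b_t=|T_t|/n$ and $\delta^*=\min_{0\leq i\leq j} d_i^c(\cC)/((i+1)\Delta)$.

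\textbf{Step 2: lower bounds on $\sum_t e_t$.} There are two.

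(i) By Claim~\ref{cl:trellis-B2}, every $u\in S_t$ carries a nonzero codeword of $\cB_2$, so $e_t\geq \theta\Delta|S_t|$.

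(ii) Fix a position $s$ and let $\tau_s$ be the first time with $v_{\tau_s,s}\in T_{\tau_s}$. By Claim~\ref{cl:trellis-C}, the word $((c_0)_{\cE(v_{0,s})},\ldots,(c_j)_{\cE(v_{j,s})})$ is a truncated codeword of $\cC$ whose first $\tau_s$ blocks vanish. Since $G_0$ has full rank, induction gives $x_0=\cdots=x_{\tau_s-1}=0$, so shifting by $\tau_s$ yields a codeword of $\cC$ with nonzero first block. Its weight on times $\tau_s,\ldots,j$ is therefore at least
\[
d^c_{j-\tau_s}(\cC)\geq \delta^*(j-\tau_s+1)\Delta .
\]
Position $s$ belongs to $T_t$ for at most $j-\tau_s+1$ values of $t$. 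Summing over $s$ gives $\sum_t e_t\geq \delta^*\Delta\sum_t|T_t|=\delta^*\Delta n\sum_t b_t$.

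\textbf{Step 3: upper bound from expansion.} Apply the bipartite expander mixing lemma on each $\cG_t$ (all copies of $\cG_0$):
\[
e_t\leq \Delta n\big(a_tb_t+\gamma\sqrt{a_tb_t(1-a_t)(1-b_t)}\big).
\]
Combining this with Step~2(i) gives $\theta\sum_t a_t\leq\sum_t(\cdots)$, and with Step~2(ii) gives $\delta^*\sum_t b_t\leq \sum_t(\cdots)$. The target is
\[
\frac{1}{j+1}\sum_t a_t\geq \frac{\delta^*-\gamma\sqrt{\theta^{-1}}}{1-\gamma}.
\]

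\textbf{Step 4: the algebra (expected main obstacle).} This step must produce exactly the stated form, with $\sqrt{\theta^{-1}}$ in the numerator and $1-\gamma$ in the denominator. I would first drop the factor $(1-b_t)\leq 1$. Then I would use the per-slice inequality $\theta a_t\leq a_tb_t+\gamma\sqrt{a_tb_t}$, which follows from (i) and Step~3, to compare $a_t$ with $b_t$. This should yield
\[
\sqrt{a_tb_t(1-a_t)}\leq b_t\sqrt{\theta^{-1}}\,(1-a_t)+\text{(lower-order terms)}.
\]
Substituting into the Step~2(ii) inequality should give
\[
\delta^*\sum_t b_t\leq \sum_t b_t\big(a_t+\gamma\sqrt{\theta^{-1}}-\gamma a_t\big)\leq \sum_t b_t\big((1-\gamma)a_t+\gamma\sqrt{\theta^{-1}}\big).
\]

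The difficulty is that this is a $b_t$-weighted average of $a_t$, not the plain average the lemma needs. To convert it, I would try monotonicity: if the bound fails, the slices with $a_t$ below the threshold would have to carry too much $b$-weight. Failing that, I would slightly strengthen the counting in Step~2(ii) so that it lower-bounds $\sum_t e_t$ times-by-time rather than only in aggregate. Everything else in the argument (Steps~1--3) is routine; this conversion is the point where the argument could need adjustment.
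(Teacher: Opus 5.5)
\textbf{Gap: converting your weighted average into an unweighted one.}

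Your Steps 1--3 match the paper's setup. The per-slice inequality you aim for in Step~4, $\arw(\bar T_t)\le(1-\gamma)a_t+\gamma\sqrt{\theta^{-1}}$, is exactly Lemma~\ref{le:rw}.

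Your route to that inequality is loose. Dropping $(1-b_t)$ and appealing to ``lower-order terms'' does not obviously recover the $-\gamma a_t$ term. The clean way is Cauchy--Schwarz, which gives $\sqrt{a_tb_t(1-a_t)(1-b_t)}\le\sqrt{a_tb_t}-a_tb_t$ and hence $e_t\le\Delta n\big((1-\gamma)a_tb_t+\gamma\sqrt{a_tb_t}\big)$. Then divide by $\Delta n b_t$ and use $a_t/b_t\le\theta^{-1}$, which follows from (i).

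The real gap is the one you flag yourself.
\begin{itemize}
\item Step~2(ii) only yields the $|T_t|$-weighted statement $\sum_t b_t\arw(\bar T_t)\ge\delta^*\sum_t b_t$.
\item The proof needs the unweighted statement $\sum_{t=0}^{j}\arw(\bar T_t)\ge(j+1)\delta^*$. After that, Lemma~\ref{le:rw} is applied slice by slice and summed with equal weights.
\item The weighted inequality does not by itself imply the unweighted one.
\item A ``time-by-time'' strengthening of the form $e_t\ge\delta^*\Delta|T_t|$ is false in general. Column distances constrain only the cumulative weights $\sum_{t=\ell}^{\ell+i}w_{t,s}$ of a position, so a position activated early may carry little weight in a later slice.
\end{itemize}

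\textbf{The missing idea is the paper's Lemma~\ref{le:partition}.}

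Write $w_{t,s}=\wt((c_t)_{\cE(v_{t,s})})$. Since $w_{t,s}=0$ for $s\notin T_t$, you may replace $1/|T_t|$ by the smaller, \emph{nonincreasing} weights $\beta_t=1/|T_0\cup\cdots\cup T_t|$, with $\beta_{j+1}=0$. This gives $\Delta\sum_t\arw(\bar T_t)\ge\sum_t\beta_t\sum_s w_{t,s}$.

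Group positions by their first activation time $\ell$, and call this set $N_\ell$. For each $s\in N_\ell$, apply Abel summation:
\[
\sum_{t=\ell}^{j}\beta_t w_{t,s}=\sum_{i=0}^{j-\ell}(\beta_{\ell+i}-\beta_{\ell+i+1})\sum_{t=\ell}^{\ell+i}w_{t,s}\ge\sum_{i=0}^{j-\ell}(\beta_{\ell+i}-\beta_{\ell+i+1})\,d_i^c(\cC).
\]
This uses the column-distance bound for \emph{every} prefix length $i$, not only the final one $d^c_{j-\tau_s}(\cC)$ as in your Step~2(ii).

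Collecting terms gives
\[
\Delta\sum_t\arw(\bar T_t)\ge(j+1)\sum_i\lambda_i\,\frac{d_i^c(\cC)}{i+1},
\qquad
\lambda_i=\frac{i+1}{j+1}\sum_{\ell=0}^{j-i}|N_\ell|(\beta_{\ell+i}-\beta_{\ell+i+1})\ge0.
\]
A telescoping computation, using $\sum_{\ell\le t}|N_\ell|=1/\beta_t$, shows $\sum_i\lambda_i=1$. So the right-hand side is a convex combination and is at least $(j+1)\Delta\delta^*$.

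Without this step, or an equivalent one, your argument does not reach the stated bound.
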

	
	\begin{proof}
		Let $c(D)=\sum_{i=0}^{\infty}c_i D^i$ be a codeword in $\tilde{\cC}$ {with $c_0\neq 0$}. For $j\geq 0$, define 
		\begin{align*}
			S_j &= \{s
			\mid (c_j)_{\cE(u_{j,s})}\neq 0,s\in\pz{n}\},\\
			T_j &= \{s
			\mid (c_j)_{\cE(v_{j,s})}\neq 0,s\in\pz{n}\},\\
			Y_j &= \{e\mid (c_j)_e\neq 0,e\in E_j\}.
		\end{align*} As discussed earlier, the coordinates of $c_j$ can be viewed as lying on the edges in $E_j$ for $j\geq 0$. So informally speaking, $S_j$ (resp., $T_j$) is the set of vertices in $U_j$ (resp., $V_j$), each of which sees a nonzero subword of $c_j$ on its incident edges, and $Y_j$ is the subset of $E_j$ consisting of edges, each of which carries a nonzero symbol of $c_j$.
		
		{Since $\tilde{\cC}_{\phi}$ is $\ff_q$-linear, the $j$th column distance of $\tilde{\cC}_{\phi}$ equals the minimum Hamming weight over $\ff_{q^{r\Delta}}$ among the codewords in $\tilde{\cC}_{\phi}$ whose first code block is nonzero.}
		Observe that $\wt(C_j)=\wt(\phi^n(c_j))=|S_j|$. Moreover, since $\phi$ is $\ff_q$-linear and $c_0\neq 0$, we have $C_0\neq0$. 
		It follows that the $j$th column distance of $\tilde{\cC}_{\phi}$ is 
		\begin{align*}
			{d_j^c(\tilde{\cC}_{\phi})}={\min_{c(D)\in\tilde{\cC}:c_0\neq 0}}{\sum_{i=0}^{j}|S_i|}.
		\end{align*}
		
		Next, we derive a lower bound on $|S_j|$ using the expansion properties of $\cG_0$ together with the distances properties of $\cB_1$ and $\cB_2$. By Claim~\ref{cl:trellis-B2}, we have $(c_j)_{\cE(u_{j,s})}\in\cB_2$ for all $s\in\pz{n}$. Since $d(\cB_2)\geq \theta \Delta$, we have $|Y_j|\geq |S_j|\cdot\theta\Delta$. 
		At the same time, we have $|Y_j|=|T_j|\cdot\arw(\bar{T}_j)\Delta$, where $\bar{T}_j:=\{(c_j)_{\cE(v_{j,s})}\mid s\in T_j\}$ is the set of the nonzero subwords of $c_j$ indexed by $T_j$. By a version of the expander mixing lemma, we have the following result that relates the size of $S_j$ with the average relative weight of $\bar{T}_j$.
		
		\begin{lemma}\label{le:rw}
			For all $j\geq 0$, we have
			\begin{align*}
				\frac{|S_j|}{n}\geq \frac{\arw(\bar{T}_j)-\gamma\sqrt{\theta^{-1}}}{1-\gamma}.
			\end{align*}
		\end{lemma}
		
		By Lemma~\ref{le:rw}, the $j$th relative column distance of $\tilde{\cC}_{\phi}$ satisfies 
		\begin{align*}
			\frac{d_j^c(\tilde{\cC}_{\phi})}{(j+1)n}
			&\geq {\min_{c(D)\in\tilde{\cC}:c_0\neq 0}}\frac{\sum_{i=0}^{j}\arw(\bar{T}_i)-(j+1)\gamma\sqrt{\theta^{-1}}}{(j+1)(1-\gamma)}.
		\end{align*}
		Therefore, it remains to show
		\begin{align*}
			\frac{\sum_{i=0}^{j}\arw(\bar{T}_i)}{j+1}\geq \min_{0\leq i\leq j}\Big\{\frac{d_i^c(\cC)}{(i+1)\Delta}\Big\}.
		\end{align*}
		In fact, the weights of the nonzero subwords of $\{c_i\mid i\in\nz{j}\}$ indexed by $\{T_i\mid i\in\nz{j}\}$ are correlated through $\cB_1$ (or rather, the convolutional code $\cC$), and a careful analysis of these correlations yields the following result.
		\begin{lemma}\label{le:partition}
			For $j\geq 0$, we have
			\begin{align}
				{\sum_{i=0}^{j}\arw(\bar{T}_i)\Delta}
				&\geq (j+1)\sum_{i=0}^j \frac{\lambda_i d_i^c({\cC})}{i+1},\label{eq:p}
			\end{align} 	
			where $\lambda_i\geq 0$ for all $i\in\nz{j}$ and $\sum_{i=0}^j\lambda_i=1$.
		\end{lemma}
		
		As a consequence of Lemma~\ref{le:partition}, the summation in the right-hand side of \eqref{eq:p} is a convex combination, and thus we obtain 
		\begin{align*}
			\frac{\sum_{i=0}^{j}\arw(\bar{T}_i)}{j+1}\geq \sum_{i=0}^j \frac{\lambda_i d_i^c({\cC})}{(i+1)\Delta}\geq \min_{0\leq i\leq j}\Big\{\frac{d_i^c(\cC)}{(i+1)\Delta}\Big\}.
		\end{align*}
	\end{proof}
	
	\begin{lemma}[The rate of $\tilde{\cC}_{\phi}$]\label{le:dim}
		The $\ff_q[D]$-rank\footnote{This $\ff_q[D]$-rank is often informally referred to as the $\ff_q$-dimension of convolutional codes over $\ff_q$.} of $\tilde{\cC}$ is at least $(k-(m+1)(1-r)\Delta)n$ and the rate of $\tilde{\cC}_{\phi}$ satisfies 
		$$
		\frac{\log_{q^{r\Delta}}|\tilde{\cC}_{\phi}|}{n}\geq \frac{k}{r\Delta}-(m+1)\Big(\frac{1}{r}-1\Big).
		$$
	\end{lemma}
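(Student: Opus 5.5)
We bound the dimension $\tilde{k}$ of $\cB=\cB^{(1)}\cap\cB^{(2)}$ by counting constraints, then pass to the convolutional code $\tilde{\cC}$ and to $\tilde{\cC}_{\phi}$.

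\emph{Dimension of $\cB$.} The code $\cB^{(1)}$ in \eqref{eq:bc1} is a direct sum over $s\in\pz{n}$ of copies of $\cB_1$. The coordinate sets $\cE(v_{0,s},\ldots,v_{m,s})$, $s\in\pz{n}$, partition $\bigcup_{j}E_j$, because each edge of $E_j$ has exactly one endpoint in $V_j$. Hence $\dim\cB^{(1)}=kn$. Similarly, $\cB^{(2)}$ in \eqref{eq:bc2} is a direct sum of $n(m+1)$ copies of $\cB_2$ on the disjoint sets $\cE(u_{j,s})$, so it has codimension $n(m+1)(1-r)\Delta$ in $\ff_q^{n(m+1)\Delta}$. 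Intersecting a subspace with a subspace of codimension $c$ lowers its dimension by at most $c$. Therefore
\[
\tilde{k}\geq kn-(m+1)(1-r)\Delta n=(k-(m+1)(1-r)\Delta)n .
\]

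\emph{Rank of $\tilde{\cC}$.} By Proposition~\ref{prop:B}, $\tilde{G}_0$ is a $\tilde{k}\times n\Delta$ matrix of full row rank. So the constant term of $\tilde{G}(D)$ has rank $\tilde{k}$, and the rows of $\tilde{G}(D)$ are $\ff_q[D]$-linearly independent. To see this, take $x(D)\neq 0$ with $x(D)\tilde{G}(D)=0$ and let $x_t$ be its lowest nonzero coefficient. The coefficient of $D^t$ in $x(D)\tilde{G}(D)$ is $x_t\tilde{G}_0\neq 0$, a contradiction. Thus the $\ff_q[D]$-rank of $\tilde{\cC}$ is exactly $\tilde{k}$, which gives the first assertion.

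\emph{Rate of $\tilde{\cC}_{\phi}$.} The code $\tilde{\cC}$ has a deterministic encoder in which each time step carries $x_t\in\ff_q^{\tilde{k}}$, so $|\tilde{\cC}|=q^{\tilde{k}}$ in the sense of the footnote. The map $c(D)\mapsto\sum_i\phi^n(c_i)D^i$ is injective blockwise: by Claim~\ref{cl:trellis-B2} each $c_i$ is determined by its restrictions $(c_i)_{\cE(u_{i,s})}$, $s\in\pz{n}$, since the sets $\cE(u_{i,s})$ partition $E_i$, and $\phi$ is a bijection on $\cB_2$. Hence the trellis code $\tilde{\cC}_{\phi}$ inherits the presenting digraph of $\tilde{\cC}$ with relabelled edges. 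It has $q^{\tilde{k}}$ outgoing edges per state, and the labelling remains lossless and deterministic. So $|\tilde{\cC}_{\phi}|=q^{\tilde{k}}$, and
\[
\frac{\log_{q^{r\Delta}}|\tilde{\cC}_{\phi}|}{n}=\frac{\tilde{k}}{r\Delta n}\geq\frac{k}{r\Delta}-(m+1)\Big(\frac{1}{r}-1\Big).
\]

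The only delicate point is this last identification: checking that the trellis-code definition (irreducible, lossless, $M$-regular digraph) is satisfied with $M=q^{\tilde{k}}$. I expect this to be the main obstacle. I would handle it with the standard state diagram whose states are $(x_{t-1},\ldots,x_{t-m})$. Losslessness follows from the injectivity of $x_t\mapsto x_t\tilde{G}_0$ together with the blockwise injectivity of $\phi^n$ shown above.
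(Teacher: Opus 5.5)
Your proposal is correct and takes essentially the same route as the paper: the bound $\dim_{\ff_q}(\cB^{(1)}\cap\cB^{(2)})\geq nk-n(m+1)(1-r)\Delta$ is the same dimension count, and the identity $\log_{q^{r\Delta}}|\tilde{\cC}_{\phi}|=\tilde{k}/(r\Delta)$ is the paper's final chain of equalities. Your two additional arguments spell out steps the paper only asserts, the latter with ``since $\phi$ is bijective'': that the $\ff_q[D]$-rank of $\tilde{\cC}$ equals $\tilde{k}$ because $\tilde{G}_0$ has full row rank, and that $\phi^n$ relabels the edges of a deterministic state diagram injectively, so the code size is preserved.
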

	\begin{proof}
		We begin by examining the $\ff_q$-dimension of $\cB$, which is equal to the $\ff_q[D]$-rank of $\tilde{\cC}$. Note that $\cB=\cB^{(1)}\cap\cB^{(2)}$.
		Moreover, from \eqref{eq:bc1} and \eqref{eq:bc2}, we have $\dim_{\ff_q}\cB^{(1)}=n\dim_{\ff_q}\cB_1=nk$ and $\dim_{\ff_q}\cB^{(2)}=n\dim_{\ff_q}\cB_2^{m+1}=n(m+1)r\Delta$. Therefore, the $\ff_q$-dimension of $\cB$ satisfies 
		\begin{align*}
			\dim_{\ff_q} \cB
			&\geq \dim_{\ff_q}\cB^{(1)} + \dim_{\ff_q}\cB^{(2)} - n(m+1)\Delta\\
			&=(k-(m+1)(1-r)\Delta)n.
		\end{align*}
		Since $\phi$ is bijective, it follows that
		\begin{align*}
			\frac{\log_{q^{r\Delta}}|\tilde{\cC}_{\phi}|}{n} =\frac{\log_{q^{r\Delta}}|\tilde{\cC}|}{n}= \frac{{\rank_{\ff_q[D]}}\tilde{\cC}}{nr\Delta} = \frac{\dim_{\ff_q}\cB}{nr\Delta} \geq \frac{k}{r\Delta}-(m+1)\Big(\frac{1}{r}-1\Big).
		\end{align*}
	\end{proof}
	
	In general, it is difficult to estimate the degree of $\tilde{\cC}_{\phi}$. However, if the generator matrix $G(D)$ for the convolutional code $\cC$ is reduced and has equal row degrees, then we have the following result.
	
	\begin{lemma}[The degree of $\tilde{\cC}_{\phi}$]\label{le:degree}
		Assume the $k\times \Delta$ generator matrix $G(D)$ of $\cC$ is reduced and has equal row degrees, i.e., $\nu_i(G)={\delta(\cC)}/{k}$ for all $i\in\pz{k}$. Then the degree of $\tilde{\cC}_{\phi}$ satisfies
		$$\frac{\delta(\tilde{\cC}_{\phi})}{n}\leq \min\Big\{\frac{\delta(\cC)}{r\Delta},\frac{(m+1)\delta(\cC)}{k}\Big\}.$$
	\end{lemma}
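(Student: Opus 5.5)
The degree of $\tilde{\cC}_{\phi}$ is a minimum of $\log_{q^{r\Delta}}|V|$ over generator digraphs, so the plan is to exhibit two concrete generator digraphs, one for each term in the minimum. The first digraph comes from a reduced generator matrix of $\tilde{\cC}$; the second is a ``state $=$ last $m$ inputs'' shift-register digraph. Throughout, we use that $\phi^n$ is an $\ff_q$-linear bijection applied blockwise at each time. Hence any generator digraph of $\tilde{\cC}$ becomes a generator digraph of $\tilde{\cC}_{\phi}$ with the same states, by relabelling each edge label $c_j\in\ff_q^{n\Delta}$ as $\phi^n(c_j)\in\ff_{q^{r\Delta}}^n$. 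Irreducibility, regularity and losslessness are preserved, because $\phi^n$ is injective on the relevant labels. Thus $\delta(\tilde{\cC}_{\phi})\le \log_{q^{r\Delta}}|V|=\log_q|V|/(r\Delta)$ for any generator digraph of $\tilde{\cC}$ with state set $V$.

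\textbf{Bound $(m+1)\delta(\cC)/k$.} Use the shift-register realization of $\tilde G(D)=\sum_{j=0}^m \tilde G_jD^j$. The state at time $t$ is $(x_{t-1},\ldots,x_{t-m})\in(\ff_q^{\tilde k})^m$, and the edge label is $x_t\tilde G_0+\cdots+x_{t-m}\tilde G_m$. The resulting digraph is $q^{\tilde k}$-regular and irreducible, since any state can be flushed and reloaded in $m$ steps. It is lossless because $\tilde G_0$ has full rank (Proposition~\ref{prop:B}), so the label determines $x_t$ given the state. Then $\log_q|V|=m\tilde k$. The hypothesis of equal row degrees means $m=\delta(\cC)/k$. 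Using $\tilde k\le \dim\cB^{(2)}=n(m+1)r\Delta$, we get $\delta(\tilde{\cC}_\phi)\le m\tilde k/(r\Delta)\le nm(m+1)\delta(\cC)/k$. This is off by a factor $m$ from the target, so the bound must be sharpened. The better estimate is $\tilde k\le \dim\cB^{(1)}=nk$, which gives $\delta(\tilde{\cC}_\phi)\le m n k/(r\Delta)= n\delta(\cC)/(r\Delta)$, the first term. For the second term, I would instead bound states by the number of distinct possible contributions $(x_{t-1},\ldots,x_{t-m})\mapsto$ the future tail $\sum_{i\ge1} x_{t-i}\tilde G_{i+\cdot}$. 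These lie in $\cB$-projections onto $E_1,\dots,E_m$, each of which sits inside $\cB_2^{n}$ per block. Counting over $\ff_{q^{r\Delta}}$, a block of $n$ symbols has $q^{r\Delta n}$ values, giving $\log_{q^{r\Delta}}|V|\le mn\le (m+1)n\delta(\cC)/k$ since $m=\delta(\cC)/k$.

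\textbf{Key steps, in order.} First, prove the relabelling lemma, that generator digraphs transfer through $\phi^n$. Second, construct the minimal (state $=$ future-contribution) realization of $\tilde{\cC}$, whose state space is the image of $(x_{t-1},\ldots,x_{t-m})\mapsto (\sum_{i=1}^{m}x_{t-i}\tilde G_{i},\ldots,x_{t-m}\tilde G_m)$. Check losslessness and irreducibility for it. Third, bound this state space in two ways: by $q^{mnk}$, via $\tilde k\le nk$; and by $(q^{r\Delta})^{mn}$, via Claim~\ref{cl:trellis-B2}, since each future block lies in the image of $\cB_2^n$. The equal-row-degree assumption, $m=\delta(\cC)/k$, converts both counts into the stated expressions.

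\textbf{Main obstacle.} The delicate point is the third step's second count. One must verify that the partial sums $\sum_{i\ge \ell}x_{t-i}\tilde G_i$ (not just full code blocks) still lie, vertex by vertex on $U$, in $\cB_2$. This should follow from the definition of $\cB$, since each $\tilde G_i$ restricted to $\cE(u_{i,s})$ generates vectors in $\cB_2$ and $\cB_2$ is linear. One must also check that the future-contribution digraph is genuinely lossless and irreducible.
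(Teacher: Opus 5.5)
Your final argument is correct, but one intermediate step contains an arithmetic slip that sent you on an unnecessary detour. In the shift-register realization, $\tilde k\le\dim\cB^{(2)}=n(m+1)r\Delta$ gives $\delta(\tilde{\cC}_{\phi})\le m\tilde k/(r\Delta)\le nm(m+1)$. With $m=\delta(\cC)/k$ this is exactly $n(m+1)\delta(\cC)/k$. You substituted $m=\delta(\cC)/k$ while also keeping the factor $m$.

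That computation, together with $\tilde k\le nk$ for the first term, is precisely the paper's proof. The paper bounds every row degree of $\tilde G(D)$ by $\delta(\cC)/k$, so the degree of $\tilde{\cC}$ is at most $\tilde k\,\delta(\cC)/k$. It then uses $\tilde k\le\min\{nk,n(m+1)r\Delta\}$ and transfers the bound to $\tilde{\cC}_{\phi}$ through the bijection $\phi$. This tacitly identifies the degree of a convolutional code with the log of the minimal number of states of a generator digraph.

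The route you actually settle on is nevertheless sound, and it differs from the paper in two useful ways. First, you build the generator digraphs explicitly and verify regularity, losslessness (from the full rank of $\tilde G_0$), irreducibility, and the relabelling through $\phi^n$. The relabelling needs every edge label to lie in $\cB_2^n$ vertex by vertex, which holds here by the linearity argument of Claim~\ref{cl:trellis-B2}. The paper only asserts these facts.

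Second, and more substantively, you use the observer-type realization with state $(\sigma_1,\ldots,\sigma_m)$, where $\sigma_\ell=\sum_{i=\ell}^{m}x_{t+\ell-1-i}\tilde G_i$. It is well defined, since the next state $(\sigma_2+x\tilde G_1,\ldots,\sigma_m+x\tilde G_{m-1},x\tilde G_m)$ and the label $x\tilde G_0+\sigma_1$ depend only on $(\sigma,x)$. Because each $\sigma_\ell$ lies in $\cB_2^n$, you get $\delta(\tilde{\cC}_{\phi})\le mn=n\delta(\cC)/k$. This improves the paper's second term by a factor $m+1$. The same realization also gives the first term, since its state space has $\ff_q$-dimension at most $m\tilde k\le mnk$.

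The source of the gain is visible within the paper's own framework. Since $\tilde G_0$ has full row rank and its row space lies in $\cB_2^n$, in fact $\tilde k\le nr\Delta$. So the bound $\tilde k\le n(m+1)r\Delta$ used in the paper is loose by exactly that factor.
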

	
	\begin{proof}
		By construction of the generator matrix $\tilde{G}(D)$, we have $\nu_i(\tilde{G})\leq {\delta(\cC)}/{k}$ for all $i\in\pz{\tilde{k}}$. It follows that the degree of $\tilde{\cC}$ is at most $\tilde{k}\delta(\cC)/k$. At the same time, by construction, we also have $\tilde{k}\leq \min\{\dim_{\ff_q}\cB^{(1)},\dim_{\ff_q}\cB^{(2)}\}=\min\{nk,n(m+1)r\Delta\}$. Therefore, the degree of $\tilde{\cC}$ is at most $\min\{n\delta(\cC),n(m+1)r\Delta\delta(\cC)/k\}$. 
		Since $\phi$ is bijective, the minimum number of states in a generator digraph for $\tilde{\cC}_{\phi}$ is the same as that of $\tilde{\cC}$. Hence, the degree of $\tilde{\cC}_{\phi}$ satisfies 
		\[
		\delta(\tilde{\cC}_{\phi})\leq \min\Big\{\frac{n\delta(\cC)}{r\Delta},\frac{n(m+1)\delta(\cC)}{k}\Big\}.
		\]
	\end{proof}
	
	Combing Lemma~\ref{le:cd}, \ref{le:dim} and \ref{le:degree}, and taking $\cC$ be an MDP convolutional code, we obtain the following theorem.
	\begin{theorem}\label{thm:main}
		Let $\cG_0$ be a $(\Delta,\gamma)$ spectral expander and let $\cC$ be an $(\Delta,k)$ MDP convolutional code over $\ff_q$. Assume that the $k\times \Delta$ generator matrix $G(D)$ of $\cC$ is reduced and has equal row degrees.\footnote{This assumption is not restrictive. Indeed, many known constructions of MDP convolutional codes are based on a reduced generator matrix with equal row degrees.} If $r\to 1$ and $\gamma\to 0$, then $\tilde{\cC}_{\phi}$ is an $(n,M)$ trellis code over $\ff_{q^{r\Delta}}$ with
		\begin{align*}
			&\frac{\delta(\tilde{\cC}_\phi)}{n}\leq \frac{\delta(\cC)}{\Delta},\\
			&\frac{\log_{q^{r\Delta}}M}{n}\to\frac{k}{\Delta},\\
			&\frac{d_j^c(\tilde{\cC}_{\phi})}{(j+1)n}\geq 1-\frac{k}{\Delta} \text{ for all $j\in\nz{L(\cC)}$},\\
			&{\frac{\df(\tilde{\cC}_{\phi})}{(j+1)n}\geq 1-\frac{k}{\Delta} \text{ for all $j\in\nz{L(\cC)}$}.}
		\end{align*}
	\end{theorem}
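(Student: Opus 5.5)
The plan is to combine Lemmas~\ref{le:cd}, \ref{le:dim} and \ref{le:degree}, plugging in the MDP property of $\cC$ and then letting $r\to 1$ and $\gamma\to 0$. Throughout, $\Delta$, $k$, $m$ and $\delta(\cC)$ are fixed. We take the memory $m$ of the reduced generator matrix with equal row degrees to be $\delta(\cC)/k$; this holds because all row degrees equal $\delta(\cC)/k$.

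\emph{Degree.} Lemma~\ref{le:degree} gives $\delta(\tilde{\cC}_\phi)/n\le \delta(\cC)/(r\Delta)$. As $r\to1$ this bound tends to $\delta(\cC)/\Delta$, which is the claimed inequality in the limit. If an exact inequality is wanted for each fixed $r<1$, we would also use the second term $(m+1)\delta(\cC)/k$ of the minimum; otherwise we read the statement as a limiting one, consistent with the ``$\to$'' in the rate.

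\emph{Rate.} By Lemma~\ref{le:dim},
\[
\frac{\log_{q^{r\Delta}}M}{n}\ge \frac{k}{r\Delta}-(m+1)\Big(\frac1r-1\Big),
\]
and the right-hand side tends to $k/\Delta$ as $r\to1$. For the matching upper bound, $\tilde k\le \dim\cB^{(1)}=nk$, so $\log_{q^{r\Delta}}M/n=\tilde k/(nr\Delta)\le k/(r\Delta)\to k/\Delta$. The two bounds together give the limit.

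\emph{Column distances.} Since $\cC$ is MDP, the bound \eqref{eq:sb-c} holds with equality for every $i\le L(\cC)$, i.e.\ $d_i^c(\cC)=(\Delta-k)(i+1)+1$. Hence
\[
\frac{d_i^c(\cC)}{(i+1)\Delta}=1-\frac{k}{\Delta}+\frac{1}{(i+1)\Delta}>1-\frac{k}{\Delta},
\]
so $\min_{0\le i\le j}\{d_i^c(\cC)/((i+1)\Delta)\}\ge 1-k/\Delta$ for $j\le L(\cC)$. Lemma~\ref{le:cd} then gives
\[
\frac{d_j^c(\tilde{\cC}_\phi)}{(j+1)n}\ge \frac{1-k/\Delta-\gamma\sqrt{\theta^{-1}}}{1-\gamma}.
\]
We need $\gamma\sqrt{\theta^{-1}}\to0$ even though $\theta$ may shrink as $r\to1$. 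So the limits must be taken in the right order: fix $r$ (hence $\cB_2$ and $\theta>0$), send $\gamma\to0$ by choosing Ramanujan graphs with $\gamma\approx 2/\sqrt{\Delta}$ and $\Delta$ large, then send $r\to1$. Equivalently, we require $\gamma=o(\sqrt\theta)$, which holds for Ramanujan graphs together with an appropriate $\cB_2$ (e.g.\ $1-r\approx\theta$ via a GV-type code with $\Delta$ large). In the limit the bound becomes $1-k/\Delta$. The strict slack $1/((i+1)\Delta)$ also shows the bound holds for small enough $\gamma$ without any limit.

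\emph{Free distance.} Column distances are nondecreasing in $j$ and converge to $\df$. For linear codes (and $\tilde{\cC}_\phi$ is $\ff_q$-linear), any nonzero codeword can be shifted so that its first nonzero block is $C_0$; hence $\df(\tilde{\cC}_\phi)=\lim_j d_j^c(\tilde{\cC}_\phi)\ge d_j^c(\tilde{\cC}_\phi)$ for every $j$. This yields the last inequality for $j\in\nz{L(\cC)}$. The one point to check is the shift invariance of $\tilde{\cC}$: multiplying by $D^{-t}$ keeps us in the code because the code is time-invariant and $\tilde G_0$ has full rank (Proposition~\ref{prop:B}).

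The main obstacle is the interplay between $\gamma$ and $\theta$ in the error term $\gamma\sqrt{\theta^{-1}}$ as $r\to1$. It must be handled by explicitly specifying the order of limits, or by a joint parameter choice with $\gamma/\sqrt\theta\to0$, while keeping $\Delta$, $k$ and $\delta(\cC)/\Delta$ meaningful, i.e.\ working with relative parameters of a family of MDP codes.
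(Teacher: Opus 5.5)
Your proposal is correct and follows the paper's proof. Degree and rate come from Lemmas~\ref{le:degree} and~\ref{le:dim} as $r\to 1$, the column distances from Lemma~\ref{le:cd} combined with the MDP equalities $d_i^c(\cC)=(\Delta-k)(i+1)+1$ for $i\le L(\cC)$, and the free distance from $d_j^c(\tilde{\cC}_\phi)\le\df(\tilde{\cC}_\phi)$.

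You also make explicit three points the paper leaves implicit: the rate upper bound is $k/(r\Delta)$ rather than $k/\Delta$; the shift justified by Proposition~\ref{prop:B} ensures the minimum-weight codeword has $C_0^*\neq 0$; and one really needs $\gamma\sqrt{\theta^{-1}}\to 0$. One aside is wrong: the term $(m+1)\delta(\cC)/k$ cannot give an exact degree bound for fixed $r<1$, because it is at least $\delta(\cC)/k>\delta(\cC)/\Delta$ for any nontrivial code.
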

	\begin{proof}
		As $r\to 1$, by Lemma~\ref{le:degree}, we have ${\delta}(\tilde{\cC}_{\phi})/n\leq \delta(\cC)/\Delta$.
		Moreover, if $r\to 1$, then by Lemma~\ref{le:dim}, we have $(\log_{q^{r\Delta}}M)/n\geq k/\Delta$.
		At the same time, by construction, the rate of $\tilde{\cC}$ is no larger than that of $\cC$, i.e., $k/\Delta$. Since $\phi$ is bijective, the same holds for $\tilde{\cC}_{\phi}$, namely, $(\log_{q^{r\Delta}}M)/n\leq k/\Delta$. Hence, we obtain 
		\begin{align*}
			\frac{\log_{q^{r\Delta}}M}{n}\to\frac{k}{\Delta}.
		\end{align*}
		By Lemma~\ref{le:cd}, as $\gamma\to 0$, the $j$th column distance of $\tilde{\cC}_{\phi}$ satisfies
		\begin{align*}
			\frac{d_j^c(\tilde{\cC}_{\phi})}{(j+1)n}\geq \min_{0\leq i\leq j}\Big\{\frac{d_i^c(\cC)}{(i+1)\Delta}\Big\}.
		\end{align*}
		Since $\cC$ is an MDP convolutional code, for all $j\in\nz{L(\cC)}$ we further obtain
		\begin{align}
			\frac{d_j^c(\tilde{\cC}_{\phi})}{(j+1)n}&\geq\frac{d_j^c(\cC)}{(j+1)\Delta}=\frac{(j+1)(1-k/\Delta)+1/\Delta}{j+1}\geq 1-\frac{k}{\Delta}.\nonumber
		\end{align}
		{Lastly, since $\tilde{\cC}_{\phi}$ is $\ff_q$-linear, there exists $C^*(D)=\sum_{i=0}^{\infty}C_i^*D^i\in\tilde{\cC}_{\phi}$ with $C^*\neq 0$ such that $\wt(C^*(D))=\df(\tilde{\cC}_{\phi})$. It follows that for all $j\geq 0$,
			\begin{align*}
				d_j^c(\tilde{\cC}_{\phi})\leq \wt(C_0^*,\ldots,C_j^*)\leq \df(\tilde{\cC}_{\phi}).
			\end{align*} Hence, for all $j\in\nz{L(\cC)}$ we obtain
			\begin{align*}
				\frac{\df(\tilde{\cC}_{\phi})}{(j+1)n}\geq\frac{d_j^c(\tilde{\cC}_{\phi})}{(j+1)n}\geq 1-\frac{k}{\Delta}.
		\end{align*}}
	\end{proof}
	\begin{remark}
		On the one hand, by the Singleton bound \eqref{eq:tsb} for trellis codes, the free distance of the code $\tilde{\cC}_{\phi}$ in Theorem~\ref{thm:main} satisfies
		\begin{align*}
			\frac{\df(\tilde{\cC}_{\phi})}{n}\leq \Big(1-\frac{\log_{q^{r\Delta}} M}{n}\Big)\left( \left\lfloor\frac{\delta(\tilde{\cC}_{\phi})}{\log_{q^{r\Delta}} M}\right\rfloor+\frac{\delta(\tilde{\cC}_{\phi})}{n-\log_{q^{r\Delta}} M}+1\right)+\frac{1}{n}.
		\end{align*}
		Combining this with Theorem~\ref{thm:main} (and letting $r\to 1$) we obtain 
		\begin{align*}
			\frac{\df(\tilde{\cC}_{\phi})}{n}\leq \Big(1-\frac{k}{\Delta}\Big)\left( \frac{\delta({\cC})}{k}+\frac{\delta({\cC})}{\Delta-k}+1\right)+o(1).
		\end{align*}
		On the other hand, Theorem~\ref{thm:main} (letting $\gamma\to 0$) also gives
		\begin{align*}
			\frac{\df(\tilde{\cC}_{\phi})}{n} \geq (L(\cC)+1)\Big(1-\frac{k}{\Delta}\Big), 
		\end{align*} where $L(\cC)$ is the maximum profile length of the $(\Delta,k)$ MDP convolutional code $\cC$. 
		If $(\Delta-k)\mid \delta(\cC)$, i.e., $\cC$ is strongly-MDS, then the above inequality becomes
		\begin{align*}
			\frac{\df(\tilde{\cC}_{\phi})}{n}
			&\geq \Big(\frac{\delta(\cC)}{k}+\frac{\delta(\cC)}{\Delta-k}+1\Big)\Big(1-\frac{k}{\Delta}\Big).
		\end{align*}
		Hence, in this case, the free distance of $\tilde{\cC}_{\phi}$ relative to $n$ is within an $o(1)$ term of the maximum possible allowed by the Singleton bound \eqref{eq:tsb} for trellis codes.
	\end{remark}

	\section{Concluding remarks}\label{sec:re}
	Since expander-based constructions often admit low-complexity encoding and decoding, it is natural to expect the same for the trellis codes developed in this paper. An important direction for future work is to make this intuition precise by designing efficient encoding/decoding algorithms for our construction. In addition, while our results yield trellis codes over constant-size alphabets with rate-distance trade-offs close to those of MDP convolutional codes, it remains open whether one can construct nearly-MDP convolutional codes over constant-size finite fields.

	\appendices
	\addtocontents{toc}{\protect\setcounter{tocdepth}{0}}
	
	\section{Proof of Claim~\ref{cl:trellis-B2}}
	For each $j\geq 0$, we have $c_j=\sum_{i=0}^mx_{j-i}\tilde{G}_i$ where $x_{j-i}\in\ff_q^{\tilde{k}}$ with $x_{j-i}=0$ if $j-i<0$. Therefore, $(c_j)_{\cE(u_{j,s})}=(x_j\tilde{G}_0)_{\cE(u_{j,s})}+(x_{j-1}\tilde{G}_1)_{\cE(u_{j,s})}+\cdots+(x_{j-m}\tilde{G}_m)_{\cE(u_{j,s})}$. Since the columns of $\tilde{G}_i$ are indexed by $E_i$ and the ordering $E_j$ is consistent with $E_0$ for $j\geq 1$, one may relabel the columns of $\tilde{G}_i$ by mapping $(u_{i,s},v_{i,t})$ to $(u_{j,s},v_{j,t})$ while preserving the ordering of the columns, where $s,t\in\pz{n}$. Thus, one may write $(x_{j-i}\tilde{G}_i)_{\cE(u_{j,s})}=(x_{j-i}\tilde{G}_i)_{\cE(u_{i,s})}$. By Proposition~\ref{prop:B} and \eqref{eq:def:B}, $(x_{j-i}\tilde{G}_i)_{\cE(u_{i,s})}$ is a codeword in $\cB_2$ for all $i\in\nz{m}$. It follows that $(c_j)_{\cE(u_{j,s})}=\sum_{i=0}^m(x_{j-i}\tilde{G}_i)_{\cE(u_{i,s})}\in\cB_2$.
	
	\section{Proof of Claim~\ref{cl:trellis-C}}
	For $j\geq 0$, we have 
	\begin{align}
		(c_0,\ldots,c_j)
		=x_0(\tilde{G}_0, \ldots, \tilde{G}_j)
		+x_1(0,			  \tilde{G}_{0}, \ldots,      \tilde{G}_{j-1})
		+\cdots
		+x_j(0,			     \ldots, 0,\tilde{G}_0),
	\end{align} where $x_0,\ldots,x_j\in\ff_q^{\tilde{k}}$ and $\tilde{G}_j=0$ if $j>m$.  Since the orderings on $E_i$ and $E_j$ are consistent, one may relabel the columns of $\tilde{G}_i$ by $E_j$ for any $i\in\nz{m}$ and $j\geq 0$, while preserving the ordering of the columns. Therefore, for $i\in\nz{m}$ one may write $((x_i\tilde{G}_0)_{\cE(v_{i,s})},\ldots,(x_i\tilde{G}_{j-i})_{\cE(v_{j,s})})=((x_i\tilde{G}_0)_{\cE(v_{0,s})},\ldots,(x_i\tilde{G}_{j-i})_{\cE(v_{j-i,s})})$. Note that by Proposition~\ref{prop:B} and \eqref{eq:def:B}, $((x_i\tilde{G}_0)_{\cE(v_{0,s})},\ldots,(x_i\tilde{G}_{j-i})_{\cE(v_{j-i,s})})$ can be obtained by either truncating the last $(m-j+i)\Delta$ coordinates of some codeword in $\cB_1$ if $j-i\leq m$ or appending $(j-i-m)\Delta$ zeros to the end of some codeword in $\cB_1$ if $j-i>m$. It then follows that 
	\begin{align*}
		((c_0)_{\cE(v_{0,s})},\ldots,(c_j)_{\cE(v_{j,s})})
		&=((x_0\tilde{G}_0)_{\cE(v_{0,s})},\ldots,(x_0\tilde{G}_{j})_{\cE(v_{j,s})})
		+(0,(x_1\tilde{G}_0)_{\cE(v_{1,s})},\ldots,(x_{1}\tilde{G}_{j-1})_{\cE(v_{j,s})})+\cdots\\
		&\qquad\qquad
		+(0,\ldots,0,(x_j\tilde{G}_0)_{\cE(v_{j,s})})\\
		&=((x_0\tilde{G}_0)_{\cE(v_{0,s})},\ldots,(x_0\tilde{G}_{j})_{\cE(v_{j,s})})
		+(0,(x_1\tilde{G}_0)_{\cE(v_{0,s})},\ldots,(x_{1}\tilde{G}_{j-1})_{\cE(v_{j-1,s})})+\cdots\\
		&\qquad\qquad
		+(0,\ldots,0,(x_j\tilde{G}_0)_{\cE(v_{0,s})})
	\end{align*} 
	is a codeword of the $(\Delta,k)$ convolutional code $\cC$ truncated at time instant $j$.

	\section{Proof of Lemma~\ref{le:rw}}
	If $S_j$ is empty, then $T_j$ is also empty and we have $\arw(\bar{T}_j)=0$. In this case, the inequality is trivial, so let us assume $S_j$ is nonempty.
	Since $S_j$ is nonempty (which implies $T_j$ is also nonempty), by a version of the expander mixing lemma \cite[Proposition~3.3]{roth2006improved}, we have the following upper bound on $|Y_j|$:
	\begin{align*}
		|Y_j|\leq \Big( (1-\gamma)\frac{|S_j||T_j|}{n} + \gamma\sqrt{{|S_j|}{|T_j|}}\Big)\Delta.
	\end{align*}
	Since $|Y_j|\geq |S_j|\cdot\theta\Delta$ and $|Y_j|=|T_j|\cdot\arw(\bar{T}_j)\Delta$, it follows that
	\begin{align}
		|S_j|\theta \leq |T_j|\arw(\bar{T}_j)  \leq\Big( (1-\gamma)\frac{|S_j||T_j|}{n} + \gamma\sqrt{{|S_j|}{|T_j|}}\Big).\label{eq:set-edge}
	\end{align}
	From the second inequality in \eqref{eq:set-edge}, we have
	\begin{align}
		\arw(\bar{T}_j) \leq (1-\gamma)\frac{|S_j|}{n} + \gamma\sqrt{\frac{|S_j|}{|T_j|}}.\label{eq:chain}
	\end{align}
	From the first inequality in \eqref{eq:set-edge}, we have ${|S_j|}/{|T_j|}\leq \arw(\bar{T}_j)/\theta $ and thus it follows from \eqref{eq:chain} that
	\begin{align*}
		\arw(\bar{T}_j) \leq (1-\gamma)\frac{|S_j|}{n} + \gamma\sqrt{\frac{\arw(\bar{T}_j)}{\theta}}\leq (1-\gamma)\frac{|S_j|}{n} + \gamma\sqrt{{\theta^{-1}}}.
	\end{align*}
	Therefore, rearranging the above inequality, we obtain
	\begin{align*}
		\frac{|S_j|}{n}\geq \frac{\arw(\bar{T}_j)-\gamma\sqrt{\theta^{-1}}}{1-\gamma}.
	\end{align*}
	
	\section{Proof of Lemma~\ref{le:partition}}
	
	Note that
	\begin{align}
		{\sum_{i=0}^{j}\arw(\bar{T}_i)\Delta}
		&=\sum_{i=0}^{j}\sum_{s\in T_i}\frac{\wt((c_i)_{\cE(v_{i,s})})}{|T_i|}\nonumber\\
		&\geq \sum_{i=0}^j{\sum_{s\in T_0\cup \dots\cup T_i}\frac{\wt((c_i)_{\cE(v_{i,s})})}{|T_0\cup \dots\cup T_i|}},\label{eq:union}
	\end{align} where \eqref{eq:union} follows since $(c_i)_{\cE(v_{i,s})}\neq 0$ for $s\in T_i$ and $(c_i)_{\cE(v_{i,s})}= 0$ for $s\in (T_0\cup\dots\cup T_i)\setminus T_i$.
	
	For $\ell\in\nz{j}$, let $a_{\ell}=|T_0\cup \cdots \cup T_{\ell}|$ and define $a_{-1}=0$. Since $a_0=|T_0|>0$, we have $0=a_{-1}<a_0\leq a_1\leq \cdots\leq a_{j}$. Moreover, we have $a_{\ell+1}-a_{\ell} = |T_{\ell+1}\setminus(T_0\cup\cdots\cup T_{\ell})|$ for $\ell\in\nz{j-1}$. Let $b_{\ell}=a_{\ell}^{-1}$ for $\ell\in\nz{j}$ and define $b_{j+1}=0$. Then we have
	\begin{align}
		\sum_{i=0}^j{\sum_{s\in T_0\cup \dots\cup T_i}\frac{\wt((c_i)_{\cE(v_{i,s})})}{|T_0\cup \dots\cup T_i|}}
		&=\sum_{\ell=0}^{j}\sum_{s\in T_\ell\setminus (T_0\cup \dots\cup T_{\ell-1})} \sum_{i=\ell}^j
		{b_i}{\wt((c_i)_{\cE(v_{i,s})})},\label{eq:sum}
	\end{align}
	where $T_0\cup \dots\cup T_{\ell-1}:=\emptyset$ if $\ell = 0$. 
	The above identity follows from noticing that one can partition the set $T_0\cup\dots\cup T_j$ into subsets $T_0, T_1\setminus T_0, \ldots, T_j\setminus(T_0\cup\dots\cup T_{j-1})$ and sum over each subset.\footnote{We adopt the convention that a sum over the empty set is zero.}
	For $\ell\in\nz{j}$, it holds that 
	\begin{align}
		\sum_{s\in T_\ell\setminus (T_0\cup \dots\cup T_{\ell-1})} \sum_{i=\ell}^j
		{b_i}{\wt((c_i)_{\cE(v_{i,s})})}
		&=
		\sum_{s\in T_\ell\setminus (T_0\cup \dots\cup T_{\ell-1})} 
		\Big(
		(b_{\ell}-b_{\ell+1})\wt((c_{\ell})_{\cE(v_{\ell,s})}) \nonumber\\
		& \qquad + (b_{\ell+1}-b_{\ell+2})( \wt((c_{\ell})_{\cE(v_{\ell,s})})+  \wt((c_{\ell+1})_{\cE(v_{\ell+1,s})})) + \cdots\nonumber\\
		&\qquad +(b_j-b_{j+1})\sum_{i=0}^{j-\ell}\wt((c_{\ell+i})_{\cE(v_{\ell+i,s})})
		\Big)\nonumber\\
		&\geq 
		\sum_{s\in T_\ell\setminus (T_0\cup \dots\cup T_{\ell-1})}
		\sum_{i=0}^{j-\ell}(b_{\ell+i}-b_{\ell+i+1})d_i^c({\cC})\label{eq:cdb}\\
		&= 
		|T_\ell\setminus (T_0\cup \dots\cup T_{\ell-1})|\sum_{i=0}^{j-\ell}(b_{\ell+i}-b_{\ell+i+1})d_i^c({\cC})\nonumber\\
		&=(a_{\ell}-a_{\ell-1})\sum_{i=0}^{j-\ell}(b_{\ell+i}-b_{\ell+i+1})d_i^c({\cC}).\nonumber
	\end{align} To justify \eqref{eq:cdb}, we note that for each $s\in T_\ell\setminus (T_0\cup \dots\cup T_{\ell-1})$ we have $\wt((c_{\ell})_{\cE(v_{\ell,s})})>0$ and $\wt((c_{\ell'})_{\cE(v_{\ell',s})})=0$ for all $\ell'\in \nz{\ell-1}$ since $s\notin T_{\ell'}$. Therefore, by Claim~\ref{cl:trellis-C}, the sequence $((c_{\ell})_{\cE(v_{\ell,s})},\ldots,(c_{\ell+i})_{\cE(v_{\ell+i,s})})$ is a nonzero truncated codeword of $\cC$ that first becomes nonzero at time instant $\ell$ for all $i\in\nz{j-\ell}$. It then follows that $\sum_{t=\ell}^{\ell+i}\wt((c_{t})_{\cE(v_{t,s})})\geq d_{i}^c(\cC)$ for all $i\in\nz{j-\ell}$, which yields \eqref{eq:cdb}.
	Hence, we obtain
	\begin{align}
		\sum_{\ell=0}^{j}\sum_{s\in T_\ell\setminus (T_0\cup \dots\cup T_{\ell-1})} \sum_{i=\ell}^j
		{b_i}{\wt((c_i)_{\cE(v_{i,s})})}
		&\geq \sum_{\ell=0}^{j} (a_{\ell}-a_{\ell-1})\sum_{i=0}^{j-\ell}(b_{\ell+i}-b_{\ell+i+1})d_i^c({\cC})\nonumber\\
		&= \sum_{i=0}^{j} \sum_{\ell=0}^{j-i} (a_{\ell}-a_{\ell-1}) (b_{\ell+i}-b_{\ell+i+1})d_i^c({\cC}).\label{eq:convex}
	\end{align}
	For $i\in\nz{j}$, define
	\begin{align*}
		\lambda_i
		&=\frac{i+1}{j+1}\sum_{\ell=0}^{j-i}(a_{\ell}-a_{\ell-1})(b_{\ell+i}-b_{\ell+i+1}).
	\end{align*}
	Clearly, $\lambda_i\geq 0$. Moreover, 
	\begin{align*}
		\sum_{i=0}^j\lambda_i &= \frac{1}{j+1}\sum_{i=0}^j(i+1)\sum_{\ell=0}^{j-i}(a_{\ell}-a_{\ell-1})(b_{\ell+i}-b_{\ell+i+1})\\
		&=\frac{1}{j+1}\sum_{\ell=0}^{j}(a_{\ell}-a_{\ell-1})\sum_{i=0}^{j-\ell}(i+1)(b_{\ell+i}-b_{\ell+i+1})\\
		&=\frac{1}{j+1}\sum_{\ell=0}^{j}(a_{\ell}-a_{\ell-1})\sum_{i=0}^{j-\ell}b_{\ell+i}\\
		&=\frac{1}{j+1}\sum_{\ell=0}^{j}(a_{\ell}-a_{\ell-1})\sum_{i=\ell}^{j}b_{i}\\
		&=\frac{1}{j+1}\sum_{i=0}^{j}b_{i}\sum_{\ell=0}^{i}(a_{\ell}-a_{\ell-1})\\
		&=\frac{1}{j+1}\sum_{i=0}^{j}b_{i}a_{i}\\
		&=1.
	\end{align*}
	Hence, it follows from \eqref{eq:union}, \eqref{eq:sum} and \eqref{eq:convex} that 
	\begin{align*}
		{\sum_{i=0}^{j}\arw(\bar{T}_i)\Delta}
		&\geq (j+1)\sum_{i=0}^j \frac{\lambda_i d_i^c({\cC})}{i+1},
	\end{align*} where $\lambda_i\geq 0$ for all $i\in\nz{j}$ and $\sum_{i=0}^j\lambda_i=1$.

	\bibliographystyle{IEEEtran}
	\bibliography{GoodConvolutionalCodes}
	
\end{document}